\newtheorem{theorem}{Theorem}
\newtheorem{corollary}[theorem]{Corollary}
\newtheorem{definition}[theorem]{Definition}
\newtheorem{example}[theorem]{Example}
\newtheorem{lemma}[theorem]{Lemma}
\newenvironment{proof}[1][Proof]{\noindent\textbf{#1.} }{\ \rule{0.5em}{0.5em}}
\newcommand{\ZZ}{{\mathbb{Z}_{2}\mathbb{Z}_{2}[u]}}
\newcommand{\C}{{\mathcal{C}}}
\newcommand{\ty}{{(r,s;k_0,k_1,k_2)}}
\newcommand{\Z}{{\mathbb{Z}}}
\newcommand{\m}{{\texttt{m}}}
\begin{document}

\title{The Structure of One Weight Linear and Cyclic Codes Over $\mathbb{Z}_{2}^{r}\times\left(\mathbb{Z}_{2}+u\mathbb{Z}_{2}\right)^{s}$}

\author{Ismail Aydogdu \footnote{ Department of Mathematics, Yildiz Technical University, Email: {iaydogdu@yildiz.edu.tr }}}
{ }
\maketitle

\begin{abstract}
Inspired by the $\Z_{2}\Z_{4}$-additive codes, linear codes over $\mathbb{Z}_{2}^{r}\times\left(\mathbb{Z}_{2}+u\mathbb{Z}_{2}\right)^{s}$ have been introduced by Aydogdu et al. more recently. Although these family of codes are similar to each other, linear codes over $\mathbb{Z}_{2}^{r}\times\left(\mathbb{Z}_{2}+u\mathbb{Z}_{2}\right)^{s}$ have some advantages compared to $\Z_{2}\Z_{4}$-additive codes. A code is called constant weight(one weight) if all the codewords have the same weight. It is well known that constant weight or one weight codes have many important applications. In this paper, we study the structure of one weight $\ZZ$-linear and cyclic codes. We classify these type of one weight codes and also give some illustrative examples.
\end{abstract}
\begin{quotation}
\bigskip Keywords: One Weight Codes, $\mathbb{Z}_{2}\mathbb{Z}_{2}[u]$-linear Codes, Duality.
\end{quotation}

\section{Introduction}

In algebraic coding theory, the most important class of codes is the family of linear codes. A linear code of length $n$ is a subspace $\C$ of a vector space
$F_{q}^n$ where $F_{q}$ is a finite field of size $q$. When $q=2$ then we have linear codes over $F_{2}$ which are called binary codes. Binary linear codes have very special and important place all among the finite field codes because of their easy implementations and applications. Beginning with a remarkable paper by Hammons et al. \cite{Hammons}, interest of codes over variety of rings have been increased. Such studies motivate the researchers to work on a different rings even over other structural algebras such as groups or modules. A $\Z_{4}$-submodule of $\Z_{4}^n$ is called a quaternary code. The structure of binary linear codes and quaternary linear codes have been studied in details for the last six decades. In 2010, Borges et al. introduced a new class of error
correcting codes over the ring $\mathbb{Z}_{2}^{\alpha}\times\mathbb{Z}_{4}^{\beta}$ called additive codes that generalizes the class of binary linear codes and the
class of quaternary linear codes  in \cite{3}. A $\mathbb{Z}_{2}\mathbb{Z}_{4}$-additive code $\mathcal{C}$ is defined to be a subgroup of $\mathbb{Z}_{2}^{\alpha}\times\mathbb{Z}_{4}^{\beta }$ where $\alpha+2\beta =n$. If $\beta =0$ then $\mathbb{Z}_{2}\mathbb{Z}_{4}$-additive codes are just binary linear codes, and if $\alpha=0,$ then $\mathbb{Z}_{2}\mathbb{Z}_{4}$-additive codes are the quaternary linear codes over $\mathbb{Z}_{4}$. $\Z_{2}\Z_{4}$-additive codes have been generalized to $\Z_{2}\Z_{2^s}$-additive codes in 2013 by Aydogdu and Siap in \cite{amis}, and recently this generalization has extended to $\Z_{p^r}\Z_{p^s}$-additive codes, for a prime $p$, by the same authors in \cite{zp}. Later, cyclic codes over $\mathbb{Z}_{2}^{\alpha}\times\mathbb{Z}_{4}^{\beta }$ have been introduced in \cite{7} in 2014 and more recently, in \cite{steven}, one weight codes over such a mixed alphabet have been studied. A code $\C$ is said to be one weight code if all the nonzero codewords of a code have the same Hamming weight where the Hamming weight of an any codeword is defined as the number of its nonzero coordinates. In \cite{Carlet}, Carlet determined one weight linear codes over $\Z_{4}$ and in \cite{Wood}, Wood studied linear one weight codes over $\Z_{m}$.
Constant weight codes are very useful in a variety of applications such as data storage, fault-tolerant circuit design and computing, pattern generation for circuit testing, identification coding, and optical overlay networks \cite{new}. Moreover, the reader can find the other applications of constant weight codes; determining the zero error decision feedback capacity of discrete memoryless channels in \cite{app1}, multiple access communications and spherical codes for modulation in \cite{app2,app3}, DNA codes in \cite{app4,app5}, powerline communications and frequency hopping in \cite{app7}.

Another important ring of four elements other than the ring $\Z_{4}$, is the ring $\Z_{2}+u\Z_{2}=R=\left\{
0,1,u,1+u\right\} $ where $u^{2}=0.$ For some of the works done in this direction we refer the reader to \cite{4,6,Dinh}. It has been shown that linear and cyclic codes over this ring have advantages compared to the ring $\mathbb{Z}_{4}.$ For an example; the finite field $GF(2)$ is a subring of the ring $R.$ So factorization over $GF(2)$ is still valid over
the ring $R$. The Gray image of any linear codes over $R$ is always a binary
linear codes which is not always the case for $\mathbb{Z}_{4}$.

In this work, we are interested in studying one weight codes over $\mathbb{Z}_{2}^{r}\times\left(\mathbb{Z}_{2}+u\mathbb{Z}_{2}\right)^{s}=\Z_{2}^{r}\times R^{s}$. This family of codes are special subsets of $\Z_{2}^r\times R^{s}$ which their all nonzero codewords have the same weight. We investigate and classify both linear and cyclic codes over $\Z_{2}^r\times R^{s}$ and also we give some one weight linear and cyclic code examples. Furthermore, we look at the Gray (binary) images of one weight cyclic codes over $\Z_{2}^{r}\times R^{s}$ and get optimal parameter binary codes.

\section{Preliminaries}

Let $R=\mathbb{Z}_{2}+u\mathbb{Z}_{2}=\left\{ 0,1,u,1+u\right\} $ be the
four-element ring with $u^{2}=0.$ It is easily seen that the ring $\mathbb{Z}_{2}$ is a subring of the ring $R.$ Then let us define the set
\begin{equation*}
\mathbb{Z}_{2}\mathbb{Z}_{2}[u]=\left\{ \left( a,b\right) |~a\in \mathbb{Z}_{2}\text{ and }b\in R \right\} .
\end{equation*}%
But we have a problem here, because the set $\mathbb{Z}_{2}\mathbb{Z}_{2}[u]$ is not well-defined with
respect to the usual multiplication by $u\in R$. So, we must define a new
method of multiplication on $\mathbb{Z}_{2}\mathbb{Z}_{2}[u]$ to make this set
as an $R$-module. Now define the mapping
\begin{eqnarray*}
\eta &:&R\rightarrow \mathbb{Z}_{2} \\
\eta \left( p+uq\right) &=&p.
\end{eqnarray*}
which means; $\eta (0)=0,~\eta (1)=1,~\eta (u)=0~$\ and $\eta (1+u)=1.$ It can be easily shown that $\eta $ is a ring homomorphism. Furthermore, for any element $d\in R,$ we can also define a scalar multiplication on $\mathbb{Z}_{2}\mathbb{Z}_{2}[u]$ as follows.
\begin{eqnarray*}
d\left( a,b\right) =\left( \eta (d)a,db\right) .
\end{eqnarray*}
This multiplication can be extended to the $\mathbb{Z}_{2}^{r}\times R^{s}$ for $d\in R$ and $v=(a_{0},a_{1},...,a_{r-1,}b_{0},b_{1},...,b_{s-1})\in\mathbb{Z}_{2}^{r}\times R^{s}$ as,
\begin{equation*}
dv=\left( \eta (d)a_{0},\eta (d)a_{1},...,\eta(d)a_{r-1,}db_{0},db_{1},...,db_{s-1}\right) .
\end{equation*}

\begin{lemma}
$\mathbb{Z}_{2}^{r }\times R^{s}$ is an $R-$module under
the multiplication defined above.
\end{lemma}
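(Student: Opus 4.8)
The plan is to verify the module axioms directly, treating $\mathbb{Z}_2^r\times R^s$ as an abelian group under componentwise addition (which is immediate, since $\mathbb{Z}_2$ and $R$ are abelian groups and a finite product of abelian groups is an abelian group) and then checking that the scalar multiplication $d\cdot v = (\eta(d)a_0,\dots,\eta(d)a_{r-1},db_0,\dots,db_{s-1})$ satisfies the four $R$-module compatibility conditions: $d(v+w)=dv+dw$, $(d+d')v=dv+d'v$, $(dd')v=d(d'v)$, and $1v=v$. The first two are distributivity laws and follow coordinatewise: on the $R^s$-coordinates they are just the ring distributive laws in $R$, and on the $\mathbb{Z}_2^r$-coordinates they reduce to $\eta(d)(a+b)=\eta(d)a+\eta(d)b$ and $(\eta(d)+\eta(d'))a=\eta(d)a+\eta(d')a$, which hold because multiplication in $\mathbb{Z}_2$ distributes over addition and because $\eta$ is additive, i.e. $\eta(d+d')=\eta(d)+\eta(d')$.

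The key observation making the associativity axiom $(dd')v=d(d'v)$ work is precisely that $\eta$ is a \emph{ring} homomorphism, so $\eta(dd')=\eta(d)\eta(d')$. On the $R^s$-coordinates, $(dd')b_i = d(d'b_i)$ is associativity of multiplication in $R$. On the $\mathbb{Z}_2^r$-coordinates, the left side gives $\eta(dd')a_j = \eta(d)\eta(d')a_j$, while the right side is $\eta(d)\bigl(\eta(d')a_j\bigr)$; these agree by associativity in $\mathbb{Z}_2$ together with the multiplicativity of $\eta$. Finally $1v=v$ holds because $\eta(1)=1$ and $1$ acts as identity in both $\mathbb{Z}_2$ and $R$.

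I would organize the write-up by first noting the abelian group structure, then fixing arbitrary $d,d'\in R$ and $v,w\in\mathbb{Z}_2^r\times R^s$ and checking each axiom componentwise, explicitly separating the "$\mathbb{Z}_2$-part" and the "$R$-part" of each coordinate computation and in each case citing the relevant property of $\eta$ (additivity for the distributive laws, multiplicativity and $\eta(1)=1$ for the associativity and unit laws). The only real content — and the single point one must not gloss over — is that $\eta$ being a ring homomorphism is exactly what is needed: without $\eta(dd')=\eta(d)\eta(d')$ the mixed associativity axiom would fail, which is why the naive multiplication-by-$u$ structure was ill-defined and the $\eta$-twisted action repairs it. Everything else is routine verification, so I would present it compactly rather than grinding through all $r+s$ coordinates.
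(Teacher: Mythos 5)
Your verification is correct: the paper states this lemma without proof, and your componentwise check of the module axioms, hinging on $\eta$ being a ring homomorphism (so that $\eta(d+d')=\eta(d)+\eta(d')$, $\eta(dd')=\eta(d)\eta(d')$, and $\eta(1)=1$ supply exactly the distributive, mixed-associativity, and unit laws on the $\mathbb{Z}_2^r$-coordinates), is precisely the intended routine argument. You also correctly identify the one point of substance, namely that the $\eta$-twisted action is what repairs the ill-definedness of naive multiplication by $u$.
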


\begin{definition}
A non-empty subset $\mathcal{C}$ of $\mathbb{Z}_{2}^{r}\times R^{s}$ is called a $\mathbb{Z}_{2}\mathbb{Z}_{2}[u]$-linear
code if it is an $R$-submodule of $\mathbb{Z}_{2}^{r}\times R^{s}$.
\end{definition}
Now, take any element $a\in R,$ then there exist unique $p,q\in \mathbb{Z}_{2}$ such
that $a=p+uq$. Also note that the ring $R$ is isomorphic to $\mathbb{Z}_{2}^{2}$ as an additive group. Therefore, any $\mathbb{Z}_{2}\mathbb{Z}_{2}[u]-$linear code $\mathcal{C}$ is isomorphic to an
abelian group of the form $\mathbb{Z}_{2}^{k_{0}+k_{2}}\times \mathbb{Z}_{2}^{2k_{1}}$, where $k_{0}$, $k_{2}$ and $k_{1}$ are positive integers. Now
define the following sets.
\begin{equation*}
\mathcal{C}_{s}^{F}=\langle\{(a,b)\in \mathbb{Z}_{2}^{r}\times R^{s}~|~b~\text{free over}~R^{s}\}\rangle
\end{equation*}
where if $\langle b \rangle=R^{s}$ then $b$ is called free over $R^{s}$.
\begin{eqnarray*}
\mathcal{C}_{0} &=&\langle\{(a,ub)\in \mathbb{Z}_{2}^{r}\times R^{s}~|~a\neq 0\} \rangle\subseteq \mathcal{C}\backslash \mathcal{C}_{s}^{F} \\
\mathcal{C}_{1} &=&\langle\{(a,ub)\in \mathbb{Z}_{2}^{r}\times R^{s}~|~a=0\} \rangle \subseteq \mathcal{C}\backslash \mathcal{C}_{s}^{F}.
\end{eqnarray*}
Therefore, denote the dimension of $\mathcal{C}_{0},~\mathcal{C}_{1}$ and $\mathcal{C}_{s}^{F}$ as $k_{0},~k_{2}$ and $k_{1}$ respectively. Under these parameters, we say that
such a $\mathbb{Z}_{2}\mathbb{Z}_{2}[u]$-linear code $\mathcal{C}$ is of type $\left( r,s;k_{0},k_{1},k_{2}\right) $.
$\mathbb{Z}_{2}\mathbb{Z}_{2}[u]$-linear codes can be considered as binary
codes under a special Gray map. For $(x,y)\in \mathbb{Z}_{2}^{r}\times R^{s}$, where $(x,y)=\left( x_{0},x_{1},\ldots ,x_{r-1},y_{0},y_{1},\ldots
,y_{s-1}\right) $ and $y_{i}=p_{i}+uq_{i}$ the Gray map define as follows.
\begin{equation}\label{graymap}
\begin{split}
& \qquad \qquad \qquad \qquad \Phi :\mathbb{Z}_{2}^{r}\times
R^{s}\rightarrow \mathbb{Z}_{2}^{n}\text{ } \\
& \Phi \left( x_{0},\ldots x_{r-1},p_{0}+uq_{0},\ldots
p_{s-1}+uq_{s-1}\right) \\
& \qquad \qquad =\left( x_{0},\ldots x_{r-1},q_{0},\ldots
,q_{s-1},p_{0}\oplus q_{0},\ldots ,p_{s-1}\oplus q_{s-1}\right),
\end{split}
\end{equation}
where $p_{i}\oplus q_{i}=p_{i}+q_{i}$ and $n=r+2s$.

The minimum distance of a linear code $\C$, denoted by $d(\C)$ is defined by
$$
d(\C)=min\{d(c_{1},c_{2}):c_{1},~c_{2}\in \C, c_{1}\neq c_{2}\}.
$$

The Hamming weight of a codeword $c$, denoted by $wt_{H}(c)$, is the number of its nonzero coordinates, i.e., $wt_{H}(c)=d(c,0)$ where $0$ is the zero codeword.
The Hamming distance between two codewords is the Hamming weight of their difference and the Lee distance for the
codes over $R$ is the Lee weight of their differences where the Lee weights of the elements of $R$  are defined as $wt_{L}(0)=0,~wt_{L}(1)=1,~wt_{L}(u)=2$ and $wt_{L}(1+u)=1$. The Gray map defined above is a distance preserving map which transforms the Lee distance in $\mathbb{Z}_{2}^{r}\times R^{s}$ to the Hamming distance in $\mathbb{Z}_{2}^{n}$.
Furthermore, for any $\mathbb{Z}_{2}\mathbb{Z}_{2}[u]$-linear code $\mathcal{C},$ we have that $\Phi \left(\mathcal{C}\right) $ is a binary linear code as well. This property is not valid for the $\mathbb{Z}_{2}\mathbb{Z}_{4}$-additive codes. And also, we always have
\begin{eqnarray*}
wt(v)=wt_{H}(v_{1})+wt_{L}(v_{2}),
\end{eqnarray*}
where $wt_{H}(v_{1})$ is the Hamming of weight of $v_{1}$ and $wt_{L}(v_{2})$ is the Lee weight of $v_{2}.$
If $\mathcal{C}$ is a $\mathbb{Z}_{2}\mathbb{Z}_{2}[u]$-linear code of type $\left( r,s;k_{0},k_{1},k_{2}\right) $
then the binary image $C=\Phi (\mathcal{C})$ is a binary linear code of
length $n=r+2s$ and size $2^{n}$. It is also called a $\mathbb{Z}_{2}\mathbb{Z}_{2}[u]$-linear code.
Now, let
\begin{equation*}
v=\left( a_{0},\ldots ,a_{r-1},b_{0},\ldots ,b_{s-1}\right) ,w=\left(
d_{0},\ldots ,d_{r-1},e_{0},\ldots ,e_{s-1}\right) \in \mathbb{Z}%
_{2}^{r}\times R^{s}
\end{equation*}
be any two elements. Then we can define the inner product as
\begin{equation*}
\left\langle v,w\right\rangle =\left(
u\sum_{i=0}^{r-1}a_{i}d_{i}+\sum_{j=0}^{s-1}b_{j}e_{j}\right) \in \mathbb{Z}%
_{2}+u\mathbb{Z}_{2}.
\end{equation*}%
According to this inner product, the dual linear code $\mathcal{C}^{\perp }$
of a $\mathbb{Z}_{2}\mathbb{Z}_{2}[u]$-linear code $\mathcal{C}$ is
also defined in a usual way easily,
\begin{equation*}
\mathcal{C}^{\perp }=\left\{ w\in \mathbb{Z}_{2}^{r}\times
R^{s}|~\left\langle v,w\right\rangle =0~\text{for all}~v\in \mathcal{C}%
\right\} .
\end{equation*}
Hence, if $\mathcal{C}$ is a $\mathbb{Z}_{2}\mathbb{Z}_{2}[u]$-linear
code, then $\mathcal{C}^{\perp }$ is also a $\mathbb{Z}_{2}\mathbb{Z}_{2}[u]$-linear code.

The standard forms of generator and parity-check matrices of a $
\mathbb{Z}_{2}\mathbb{Z}_{2}[u]$-linear code $\mathcal{C}$ is given as follows.

\begin{theorem}\label{Matrices}\cite{2}
Let ${\mathcal{C}}$ be a ${\mathbb{Z}_{2}\mathbb{Z}_{2}[u]}$-linear code of
type ${(r,s;k_{0},k_{1},k_{2})}$. Then the standard forms of the generator and the parity-check
matrices of ${\mathcal{C}}$ are:

\begin{equation*}
G= \left[
\begin{array}{cc|ccc}
I_{k_{0}} & A_{1} & 0 & 0 & uT \\ \hline
0 & S & I_{k_{1}} & A & B_{1}+uB_{2} \\
0 & 0 & 0 & uI_{k_{2}} & uD%
\end{array}%
\right]
\end{equation*}

\begin{equation*}
H= \left[
\begin{array}{cc|ccc}
-A_{1}^{t} & I_{r-k_{0}} & -uS^{t} & 0 & 0 \\
-T^{t} & 0 & -(B_{1}+uB_{2})^{t}+D^{t}A^{t} & -D^{t} & I_{s-k_{1}-k_{2}} \\
0 & 0 & -uA^{t} & uI_{k_{2}} & 0%
\end{array}
\right]
\end{equation*}
where $A,~A_{1},~B_{1},~B_{2},$ $D,~S$ and $T$ are matrices over $\mathbb{Z}
_2$.
\end{theorem}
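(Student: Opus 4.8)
The plan is to obtain the generator matrix by a Gaussian-elimination argument adapted to the ring $R=\mathbb{Z}_2+u\mathbb{Z}_2$, and then to read off the parity-check matrix from the structure of the dual code. Throughout, the two facts that drive everything are that $R$ is a finite chain ring -- its only ideals form the chain $R\supset uR\supset\{0\}$, so an element of $R$ is either a unit ($1$ or $1+u$, those with nonzero image under $\eta$) or a multiple of $u$ -- together with the identities $\eta(u)=0$ and $u^{2}=0$.

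First I would take any set of $R$-module generators of $\mathcal{C}$, write them as the rows of a matrix split into a $\mathbb{Z}_2^{r}$-block and an $R^{s}$-block, and normalise the $R^{s}$-block using the chain-ring structure. Classify the generators according to whether their $R^{s}$-part is free over $R$ (contains a unit pivot) or lies in $uR^{s}$. By the definition of the type there are exactly $k_1$ of the first kind; after row operations over $R$ and a permutation of the last $s$ coordinates their $R^{s}$-parts take the echelon shape $(\,I_{k_1}\ A\ B_1+uB_2\,)$, while the $R^{s}$-parts of the remaining generators get pushed into $uR^{s}$. Among those remaining generators, the ones with nonzero $\mathbb{Z}_2^{r}$-part span $\mathcal{C}_0$; binary row reduction plus a permutation of the first $r$ coordinates brings their $\mathbb{Z}_2^{r}$-parts to $(\,I_{k_0}\ A_1\,)$. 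The rest have zero $\mathbb{Z}_2^{r}$-part, span $\mathcal{C}_1$, and a further permutation of the last $s$ coordinates brings their $R^{s}$-parts to $uI_{k_2}$ followed by $uD$. A pivot count then gives $|\mathcal{C}|=2^{k_0+2k_1+k_2}$.

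It remains to clear the cross terms, and this is where $\eta(u)=0$ and $u^{2}=0$ are used. Multiplying a row by an element of $uR$ kills its $\mathbb{Z}_2^{r}$-part, so subtracting suitable $u$-multiples of the free rows and of the $\mathcal{C}_1$-rows from the $\mathcal{C}_0$-rows leaves the block $I_{k_0}$ untouched while forcing the $R^{s}$-part of those rows to be an arbitrary element of $uR^{s}$, which equals $uT$ for a binary matrix $T$ since $u^{2}=0$; symmetrically, subtracting binary multiples of the $\mathcal{C}_0$-rows from the free rows clears their first $k_0$ coordinates, producing the block $S$, and alters their $R^{s}$-part only in its $u$-component, which is absorbed into $B_2$. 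The $\mathcal{C}_1$-rows keep zero $\mathbb{Z}_2^{r}$-part and $R^{s}$-part of the form $u\cdot(\text{binary matrix})$, which is why $D$ may be taken over $\mathbb{Z}_2$. Assembling the three groups of rows yields $G$ in the stated form.

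For the parity-check matrix I would first verify $GH^{t}=0$ by a direct block computation with the inner product $\langle v,w\rangle=u\sum a_id_i+\sum b_je_j$: the mixed $\mathbb{Z}_2$-terms cancel, several products vanish because $u^{2}=0$, and the surviving identities reduce to expressions such as $-(B_1+uB_2)+AD-AD+(B_1+uB_2)=0$ and $-uD+uD=0$. Thus every row of $H$ lies in $\mathcal{C}^{\perp}$. Then $|\mathcal{C}|\,|\mathcal{C}^{\perp}|=|\mathbb{Z}_2^{r}\times R^{s}|=2^{\,r+2s}$ together with $|\mathcal{C}|=2^{k_0+2k_1+k_2}$ gives $|\mathcal{C}^{\perp}|=2^{\,(r-k_0)+2(s-k_1-k_2)+k_2}$, so $\mathcal{C}^{\perp}$ is a $\mathbb{Z}_2\mathbb{Z}_2[u]$-linear code of type $(r,s;\,r-k_0,\,s-k_1-k_2,\,k_2)$; since the rows of $H$ exhibit the pivots $I_{r-k_0}$, $I_{s-k_1-k_2}$ and $uI_{k_2}$ in the right numbers, they form an independent generating set of $\mathcal{C}^{\perp}$ of the correct size, and $H$ is already written in the normal form that the first part of the argument produces for $\mathcal{C}^{\perp}$. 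I expect the main obstacle to be the bookkeeping in the cross-term elimination: since $R$ is not a field one must track separately which row operations are legitimate and how each affects the $\mathbb{Z}_2^{r}$- and $R^{s}$-blocks, invoking $\eta(u)=0$ and $u^{2}=0$ at each step; once the unit-versus-$u$-multiple dichotomy is organised, the remainder is routine linear algebra over a chain ring.
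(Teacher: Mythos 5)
The paper itself gives no proof of this theorem — it is imported from \cite{2} — and your sketch follows essentially the same route as the original argument there: Gaussian elimination adapted to the chain ring $R$ (units versus multiples of $u$, with $\eta(u)=0$ and $u^{2}=0$ controlling how row operations touch the $\mathbb{Z}_2^{r}$-block) to reach the standard form of $G$, then the block verification $GH^{t}=0$ combined with the cardinality count to identify the row span of $H$ with $\mathcal{C}^{\perp}$. Your orthogonality identities and the pivot count $|\mathcal{C}|=2^{k_0+2k_1+k_2}$ check out, so the proposal is correct; the only ingredient you quote rather than derive is the duality size relation $|\mathcal{C}|\,|\mathcal{C}^{\perp}|=2^{r+2s}$, which is likewise established separately in \cite{2}.
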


Therefore, we can conclude the following corollary.
\begin{corollary}
If ${\mathcal{C}}$ is a ${\mathbb{Z}_{2}\mathbb{Z}_{2}[u]}$-linear code of
type ${(r,s;k_{0},k_{1},k_{2})}$ then ${\mathcal{C}}^{\perp }$ is of type $%
\left( r,s;r-k_{0},s-k_{1}-k_{2},k_{2}\right) $.
\end{corollary}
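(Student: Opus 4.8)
The plan is to read off the type of $\mathcal{C}^\perp$ directly from the parity-check matrix $H$ given in Theorem~\ref{Matrices}, by arguing that $H$ is itself (up to column permutation within the two blocks) a generator matrix in standard form for the dual code, and then counting the relevant ranks. Recall that $\mathcal{C}^\perp$ is again a $\mathbb{Z}_2\mathbb{Z}_2[u]$-linear code (stated in the excerpt), so it has a well-defined type $(r,s;k_0',k_1',k_2')$; our job is to identify $k_0'$, $k_1'$, $k_2'$ in terms of $r,s,k_0,k_1,k_2$.

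First I would recall that, by definition of the type, $k_1'$ is the dimension of the "free part" $(\mathcal{C}^\perp)_s^F$, $k_0'$ is the dimension of the part $\mathcal{C}_0^\perp$ coming from generators $(a,ub)$ with $a\neq 0$, and $k_2'$ is the dimension of $\mathcal{C}_1^\perp$ coming from generators $(0,ub)$. Now inspect the three block-rows of $H$. The middle block-row, $\bigl(-T^t\ \ 0\ \mid\ -(B_1+uB_2)^t+D^tA^t\ \ -D^t\ \ I_{s-k_1-k_2}\bigr)$, contains an identity block $I_{s-k_1-k_2}$ in the $R$-part and its $R$-part entries are genuinely free (the identity block forces $\langle b\rangle = R^{\,s-k_1-k_2}$ on those coordinates); this contributes $s-k_1-k_2$ to the free dimension, so $k_1' = s-k_1-k_2$. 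The bottom block-row, $\bigl(0\ \ 0\ \mid\ -uA^t\ \ uI_{k_2}\ \ 0\bigr)$, has all its $R$-part entries divisible by $u$ and has zero $\mathbb{Z}_2$-part, so it is of the shape $(0,ub)$ and contributes $k_2$ independent generators to $\mathcal{C}_1^\perp$; hence $k_2' = k_2$. The top block-row, $\bigl(-A_1^t\ \ I_{r-k_0}\ \mid\ -uS^t\ \ 0\ \ 0\bigr)$, has $R$-part divisible by $u$ and nonzero $\mathbb{Z}_2$-part (the $I_{r-k_0}$ block), so it contributes $r-k_0$ generators to $\mathcal{C}_0^\perp$; hence $k_0' = r-k_0$. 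This gives the claimed type $(r,s;r-k_0,s-k_1-k_2,k_2)$.

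To make this rigorous rather than merely a pattern-match, I would verify two things. (i) The rows of $H$ actually generate $\mathcal{C}^\perp$ and not a proper submodule: this follows from a cardinality count. By Theorem~\ref{Matrices}, $|\mathcal{C}| = 2^{k_0}\cdot 4^{k_1}\cdot 2^{k_2} = 2^{k_0+2k_1+k_2}$, and since $\mathbb{Z}_2^r\times R^s$ has $2^{r+2s}$ elements with a nondegenerate pairing, $|\mathcal{C}^\perp| = 2^{r+2s}/2^{k_0+2k_1+k_2} = 2^{(r-k_0)+2(s-k_1-k_2)+k_2}$, which is exactly the order of the module generated by $H$ with the type we just read off; combined with $\mathcal{C}^\perp \supseteq \langle H\rangle$ (which holds because $G H^t = 0$, a routine block-multiplication check using $u^2=0$), equality follows. (ii) The generator set defining $\langle H\rangle$ is in the standard shape used in the Definition, so that the counts $k_0',k_1',k_2'$ are genuinely the invariants of the type and not artifacts of a bad generating set; here one notes that the three block-rows are, respectively, of the forms $(a,ub)$ with $a\neq 0$, a free row, and $(0,ub)$, exactly matching $\mathcal{C}_0$, $\mathcal{C}_s^F$, $\mathcal{C}_1$.

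The main obstacle is point (ii): one must be slightly careful that the "type" is well-defined, i.e. that any generator matrix in the standard form of Theorem~\ref{Matrices} yields the same $(k_0,k_1,k_2)$, and that $H$ — after possibly permuting coordinates so its blocks line up with the standard-form template — really is such a standard-form matrix for $\mathcal{C}^\perp$. Once one grants (as the excerpt implicitly does, since it cites Theorem~\ref{Matrices} and asserts $\mathcal{C}^\perp$ is $\mathbb{Z}_2\mathbb{Z}_2[u]$-linear) that the type is well-defined and that $H$ is the standard-form parity-check matrix, the corollary is immediate from the block structure and the cardinality count above. I would therefore present the proof as: compute $|\mathcal{C}^\perp|$ by duality, observe $H$ generates $\mathcal{C}^\perp$, and read the type off the three block-rows of $H$.
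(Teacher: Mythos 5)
Your proposal is correct and follows essentially the same route as the paper, which presents the corollary as an immediate consequence of Theorem~\ref{Matrices}: the rows of the standard-form parity-check matrix $H$ generate $\mathcal{C}^{\perp}$, and its three block-rows (binary part with $I_{r-k_0}$ and $R$-part in $u\,\mathbb{Z}_2$, free rows with $I_{s-k_1-k_2}$, and rows of the form $(0,ub)$ with $uI_{k_2}$) give $k_0'=r-k_0$, $k_1'=s-k_1-k_2$, $k_2'=k_2$. Your additional cardinality check via $|\mathcal{C}||\mathcal{C}^{\perp}|=2^{r+2s}$ is consistent with the paper (it also follows from Theorem~\ref{Macw}) but is not needed beyond what Theorem~\ref{Matrices} already asserts.
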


The weight enumerator of an any ${\mathbb{Z}_{2}\mathbb{Z}_{2}[u]}$-linear code $\C$ of
type ${(r,s;k_{0},k_{1},k_{2})}$ is defined as
\begin{equation*}
W_{{\mathcal{C}}}(x,y)=\sum\limits_{c\in {\mathcal{C}}}x^{n-w\left( c\right)
}y^{w\left( c\right) }.
\end{equation*}
where, $n=r+2s.$
Moreover, the MacWilliams relations for codes over $\ZZ$ can be given as follows.
\begin{theorem}\label{Macw}\cite{2}
Let ${\mathcal{C}}$ be a ${\mathbb{Z}_{2}\mathbb{Z}_{2}[u]}-$linear code.
The relation between the weight enumerators of ${\mathcal{C}}$ and its dual
is
\begin{equation*}
W_{{\mathcal{C}}^{\perp}}\left(x,y\right)=\frac{1}{{\left|{\mathcal{C}}%
\right| }}W_{{\mathcal{C}}}\left(x+y,x-y\right).
\end{equation*}
\end{theorem}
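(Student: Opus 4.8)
The plan is to reduce the identity to a known MacWilliams-type theorem by exploiting the additive-group structure that is already highlighted in the preliminaries, namely that $R\cong\mathbb{Z}_2^2$ as an abelian group, so that $\mathbb{Z}_2^r\times R^s$ is an abelian group of order $2^{r+2s}$ carrying the weight function $w(v)=wt_H(v_1)+wt_L(v_2)$. First I would recall the general principle (MacWilliams for additive codes over finite abelian groups, due to Delsarte, and in the module-theoretic form used for $\mathbb{Z}_2\mathbb{Z}_4$-additive codes in \cite{3}): if $\mathcal{C}$ is a subgroup of a finite abelian group $\mathcal{G}$ with a suitable weight-compatible character pairing $\langle\cdot,\cdot\rangle$, and $\mathcal{C}^\perp$ is the annihilator of $\mathcal{C}$ under that pairing, then $W_{\mathcal{C}^\perp}(x,y)=\tfrac{1}{|\mathcal{C}|}W_{\mathcal{C}}(x+y,x-y)$ provided the single-coordinate transform sends each symbol's contribution $x^{1-w(\cdot)}y^{w(\cdot)}$ to the averaged version; concretely one needs that for a nontrivial additive character $\chi$ on each coordinate group one has $\sum_{t}\chi(at)\,x^{1-w(t)}y^{w(t)}$ equal to $x+y$ when $a=0$ and $x-y$ when $a\neq 0$, uniformly.

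The key steps, in order, are: (1) fix the standard additive characters — on a $\mathbb{Z}_2$-coordinate use $\chi(t)=(-1)^t$, and on an $R$-coordinate define the character induced by the inner product of the paper, i.e. the character $b\mapsto \lambda(\langle c,b\rangle)$ where $\lambda:R\to\mathbb{C}^\times$ is a fixed nontrivial character of $(R,+)$; verify that $\langle v,w\rangle=0$ in $R$ is equivalent to $\lambda(\langle v,w\rangle)=1$ for \emph{all} such characters, which identifies $\mathcal{C}^\perp$ with the character-theoretic annihilator and gives $|\mathcal{C}|\,|\mathcal{C}^\perp|=2^{r+2s}$ (this is consistent with the type count in the Corollary). (2) Establish the single-coordinate identities: for the binary coordinates, $\sum_{t\in\mathbb{Z}_2}(-1)^{at}x^{1-wt_H(t)}y^{wt_H(t)}=x+(-1)^a y$; for the $u$-coordinates one must check that summing $x^{2-wt_L(b)}y^{wt_L(b)}$ over $b\in R$ against the four characters $\lambda(\langle c,b\rangle)$, $c\in R$, yields $(x+y)^2$ when the character is trivial and $(x-y)^2$ otherwise — here the values $wt_L(0)=0$, $wt_L(1)=wt_L(1+u)=1$, $wt_L(u)=2$ are exactly what makes $x^2+2xy+y^2$ and $x^2-2xy+y^2$ appear (the factor of $2$ in the middle term matches the two weight-one elements). (3) Take the product over all $r+2s$ coordinates: $\sum_{w\in\mathcal{G}}\big(\prod_{\text{coords}}\chi_c(w)\big)\,x^{n-w(w)}y^{w(w)}$ factors as a product of the single-coordinate sums and hence equals $(x+y)^{n-w(c)}(x-y)^{w(c)}$ for the ambient element $c$ indexing the characters. (4) Apply the orthogonality relation $\sum_{c\in\mathcal{C}}\lambda(\langle c,w\rangle)=|\mathcal{C}|\cdot\mathbf{1}[w\in\mathcal{C}^\perp]$, sum the factored identity over $c\in\mathcal{C}$, interchange the two finite sums, and read off $W_{\mathcal{C}}(x+y,x-y)=|\mathcal{C}|\,W_{\mathcal{C}^\perp}(x,y)$.

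The main obstacle I anticipate is step (2) in the $R$-coordinate together with the bookkeeping in (1): one has to be careful that the ring-valued inner product $\langle v,w\rangle=u\sum a_id_i+\sum b_je_j$ genuinely separates points via additive characters of $R$ — note the $\mathbb{Z}_2$-part enters only through multiplication by $u$, so its contribution to $\langle v,w\rangle$ lands in $u\mathbb{Z}_2$, and a single character $\lambda$ of $(R,+)$ must be chosen so that $\lambda$ restricted to $u\mathbb{Z}_2$ is still nontrivial; equivalently one should check the bilinear form is nondegenerate in the appropriate sense so that $\mathcal{C}\mapsto\mathcal{C}^\perp$ is an order-reversing involution with $|\mathcal{C}||\mathcal{C}^\perp|=2^n$. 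Once the correct character is pinned down, the four-term sum over $R$ for each choice of $c\in R$ is a direct finite computation, and everything else is the standard character-sum manipulation. An alternative, essentially equivalent route is to push everything through the Gray map $\Phi$ of \eqref{graymap}: since $\Phi$ is an additive isometry onto a binary linear code and, as one checks, $\Phi(\mathcal{C}^\perp)=\Phi(\mathcal{C})^\perp$ (the orthogonality being preserved by the explicit coordinate formulas), the statement follows immediately from the classical binary MacWilliams identity applied to $\Phi(\mathcal{C})$ — I would mention this as a remark but carry out the character-sum proof as the primary argument since it does not require separately verifying $\Phi(\mathcal{C}^\perp)=\Phi(\mathcal{C})^\perp$.
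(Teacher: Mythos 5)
You should know at the outset that the paper itself offers no proof of this theorem: it is quoted from reference \cite{2}, so your argument can only be judged on its own merits. Your overall architecture is the right one and is the standard Delsarte-type proof for such additive codes: fix a character $\lambda$ of $(R,+)$ that is nontrivial on $u\mathbb{Z}_2$, use the fact that $\langle dv,w\rangle=d\langle v,w\rangle$ (so that $\langle \mathcal{C},w\rangle$ is an ideal of $R$, whence $\lambda(\langle c,w\rangle)=1$ for all $c\in\mathcal{C}$ forces $w\in\mathcal{C}^{\perp}$), factor the weighted character sum coordinatewise, and finish with orthogonality and an interchange of sums; your closing remark that one could instead push everything through $\Phi$ is also viable, since $\Phi(\mathcal{C}^{\perp})=\Phi(\mathcal{C})^{\perp}$ does hold here (expand $\langle v,w\rangle=p\cdot p'+u(a\cdot d+p\cdot q'+q\cdot p')$, compare with the binary dot product of the Gray images, and count sizes).

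However, your step (2) contains a genuine error at exactly the point that makes the Lee-type weight enumerator work. For an $R$-coordinate the weighted sum against the character $b\mapsto\lambda(cb)$ is \emph{not} $(x-y)^2$ for every $c\neq 0$: taking $\lambda(p+uq)=(-1)^q$, for $c=1$ one gets $x^2+\lambda(1)xy+\lambda(1+u)xy+\lambda(u)y^2=x^2-y^2=(x+y)(x-y)$, and the same for $c=1+u$; only $c=u$ gives $(x-y)^2$, and no other admissible choice of $\lambda$ (it must be nontrivial on $u\mathbb{Z}_2$) changes this. The correct single-coordinate identity is $\sum_{b\in R}\lambda(cb)\,x^{2-wt_L(b)}y^{wt_L(b)}=(x+y)^{2-wt_L(c)}(x-y)^{wt_L(c)}$, and it is precisely this non-uniformity over the nonzero symbols that makes the product in your step (3) come out as $(x+y)^{n-w(c)}(x-y)^{w(c)}$ with $w(c)$ the Lee-type weight. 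As written, your (2) and (3) contradict each other: carrying (2) through literally replaces $w(c)$ by $wt_H(a)+2\,|\{j: b_j\neq 0\}|$ in the exponent, which is wrong whenever a codeword has unit entries in its $R$-part, so the argument as stated proves a different (false) identity. The repair is a half-line recomputation of the four-term sum for $c=1$ and $c=1+u$, but the check you describe is the crux of the proof and must be stated correctly.
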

\bigskip
We have given some information about the general concept of codes over $\mathbb{Z}_{2}^{r}\times\left(\mathbb{Z}_{2}+u\mathbb{Z}_{2}\right)^{s}$. To make reader understanding the paper easily we give the following example.

\begin{example}
Let $\C$ be a linear code over $\Z_{2}^{3}\times \left(\Z_{2}+u\Z_{2}\right)^{4}$ with the following generator matrix.
\begin{equation*}
\left[
\begin{array}{ccc|cccc}
1 & 1 & 0 & 0 & u  &  u & u \\
0 & 1 & 1 & 1 & 1+u &  u & 0 \\
0 & 1 & 0 & u & u  & u  & 0
\end{array}
\right].
\end{equation*}

We will find the standard form of the generator matrix of $\C$ and then using this standard form, we find the generator matrix of the linear dual code $\C^{\perp}$ and also we determine the types of both $\C$ and its dual.

Now, applying elementary row operations to above generator matrix, we have the standard form as follows.

\begin{equation*}
G=\left[
\begin{array}{ccc|cccc}
1 & 0 & 0 & 0 & u  &  0 & u \\
0 & 1 & 0 & 0 & 0 &  u & 0 \\ \hline
0 & 0 & 1 & 1 & 1+u  & 0  & 0
\end{array}
\right].
\end{equation*}
Since, $G$ is in the standard form we can write this matrix as

\begin{equation*}
G=
\left[
\begin{array}{ccc||cccc}
\cellcolor{yellow!50}1& \cellcolor{yellow!50}0& \cellcolor{blue!50}0& \cellcolor{gray!50}0& \cellcolor{green!50}u& \cellcolor{green!50}0 & \cellcolor{green!50}u \\
\cellcolor{yellow!50}0 & \cellcolor{yellow!50}1& \cellcolor{blue!50}0 & \cellcolor{gray!50}0& \cellcolor{green!50}0& \cellcolor{green!50}u & \cellcolor{green!50}0\\ \hline\hline
\cellcolor{gray!50}0 & \cellcolor{gray!50}0 & \cellcolor{cyan!50}1 & \cellcolor{yellow!50}1 & \cellcolor{red!50}1+u & \cellcolor{red!50}0 & \cellcolor{red!50}0
\end{array}\right]
=
\left[
\begin{array}{cc||cc}
\cellcolor{yellow!50}I_{k_0}& \cellcolor{blue!50}A_{1}& \cellcolor{gray!50}0& \cellcolor{green!50}uT \\ \hline\hline
\cellcolor{gray!50}0 & \cellcolor{cyan!50}S_{1}& \cellcolor{yellow!50}I_{k_1} & \cellcolor{red!50}B_{1}+uB_{2}
\end{array}\right].
\end{equation*}
Hence, with the help of Theorem \ref{Matrices} the parity-check matrix of $\C$ is
\begin{equation*}
H=\left[\begin{array}{ccc|cccc}
0 & 0 & 1 & u &  0 & 0&0 \\
1 & 0 & 0 & 1+u& 1 & 0&0 \\
0 & 1 & 0 & 0 & 0 & 1&0\\
1 & 0 & 0 & 0 & 0 & 0&1
\end{array}\right].
\end{equation*}

Therefore,

\begin{itemize}
\item
$\C$ is of type $(3,4;2,1,0)$ and has $2^{2}4^{1}=16$ codewords.

\item

$\C^{\perp}$ is of type $(3,4;1,3,0)$ and has $2^{1}4^{3}=128$ codewords.

\item
$\C=\{(0,0,0,|0,0,0,0),(1,0,0,|0,u,0,u),(0,1,0,|0,0,u,0),(0,0,1,|1,1+u,0,0),(0,0,0,|u,u,0,0),
(0,0,1,|1+u,1,0,0),(1,1,0,|0,u,u,u),(1,0,1,|1,1,0,u),(0,1,1,|1,1+u,u,0),(1,1,1,|1,1,u,u),(1,0,0,|u,0,0,u),
(0,1,0,|u,u,u,0),(1,1,0,|u,0,u,u),(1,0,1,|1+u,1+u,0,u),(0,1,1,|1+u,1,u,0),(1,1,1,|1+u,1+u,u,u)\}.$
\item

$W_{\C}(x,y)=x^{11}+3 x^8 y^3+x^7 y^4+2 x^6 y^5+4 x^5 y^6+x^4 y^7+2 x^3 y^8+2 x^2 y^9$.

\item
$W_{\C^{\perp}}(x,y)=\frac{1}{|\C|}W_{\C}(x+y,x-y)=x^{11}+6 x^9 y^2+8 x^8 y^3+16 x^7 y^4+32 x^6 y^5+26 x^5 y^6+24 x^4 y^7+15 x^3 y^8$.

\item
The Gray image $\Phi(\C)$ of $\C$ is a $[11,4,3]$ binary linear code.

\item

$\Phi(\C^{\perp})$ is a $[11,7,2]$ binary linear code.

\end{itemize}
\end{example}

\section{The Structure of One Weight $\ZZ$-linear Codes}

In this part of the paper, we study the structure of one weight codes over $\Z_{2}^{r}\times R^{s}$.

\begin{definition}
Let $\C$ be a $\ZZ$-linear code. $\C$ is called a one (constant) weight code if all of its nonzero codewords have the same weight. Furthermore, if the weight of $\C$ is $\m$ then it is called a code with weight $\m$.
\end{definition}

\begin{definition}
Let $c_{1},c_{2},e_{1},e_{2}$ be any four distinct codewords of $\ZZ$-linear code $\C$. If the distance between $c_{1}$ and $e_{1}$ is equal to the distance between $c_{2}$ and $e_{2}$, that is, $d(c_{1},e_{1})=d(c_{2},e_{2})$, then $\C$ is said to be equidistant.
\end{definition}

\begin{theorem}
Let $\C$ be an equidistant $\ZZ$-linear code with distance $\m$. Then $\C$ is a one weight code with weight $\m$. Moreover, the binary image $\Phi(\C)$ of $\C$ is also a one weight code with weight $\m$.
\end{theorem}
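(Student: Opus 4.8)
The plan is to reduce the statement to the translation invariance of the weight metric on a linear code. Since $\C$ is an $R$-submodule of $\Z_2^{r}\times R^{s}$, we have $0\in\C$, and for any two codewords $c,e\in\C$ the difference $c-e$ again lies in $\C$ with $d(c,e)=wt(c-e)$; in particular $d(c,0)=wt(c)$ for every $c\in\C$. The first move is to fix an arbitrary nonzero codeword $c\in\C$ and observe that $c$ and $0$ form a pair of distinct codewords, so the equidistance hypothesis (which asserts that any pair of distinct codewords realizes the common distance $\m$) yields $wt(c)=d(c,0)=\m$. Since $c$ was arbitrary among the nonzero codewords, every nonzero codeword of $\C$ has weight exactly $\m$, i.e.\ $\C$ is a one weight code with weight $\m$. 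Equivalently, by the identity $d(c,e)=wt(c-e)$ the multiset of pairwise distances of $\C$ coincides with the multiset of weights of nonzero codewords, so ``equidistant'' and ``one weight'' are two readings of the same condition, with $\m$ the shared value.

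For the binary image I would invoke the two properties of the Gray map recorded in the Preliminaries. First, $\Phi\colon\Z_2^{r}\times R^{s}\to\Z_2^{n}$ is a bijection: the domain and codomain both have $2^{n}$ elements, and $\Phi$ is injective because $p_i,q_i$ are recovered from the pair $(q_i,\,p_i\oplus q_i)$. Second, $\Phi$ is weight preserving, $wt_H(\Phi(v))=wt_H(v_1)+wt_L(v_2)=wt(v)$ for $v=(v_1,v_2)\in\Z_2^{r}\times R^{s}$; this is the coordinatewise statement that on $R=\{0,1,u,1+u\}$ the two images of $p+uq$ contribute Hamming weight $wt_H(q)+wt_H(p\oplus q)$, which equals $wt_L(p+uq)$ in each of the four cases. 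Because $\Phi$ is additive and injective with $\Phi(0)=0$, the nonzero codewords of $\Phi(\C)$ are exactly the images $\Phi(c)$ with $c\in\C\setminus\{0\}$; by the previous paragraph each satisfies $wt_H(\Phi(c))=wt(c)=\m$. Hence $\Phi(\C)$ is a binary one weight code with weight $\m$, which is moreover linear by the remark in the Preliminaries.

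I do not expect a genuine obstacle: the argument is essentially bookkeeping built on linearity and the isometry property of $\Phi$. The only points that need a word of care are the degenerate cases of very small codes, where the equidistance condition as phrased (in terms of four distinct codewords) may be vacuous and the conclusion must be checked directly, and the correct use of the identity $wt_H(\Phi(v))=wt(v)$, which couples the Lee-to-Hamming isometry of $\Phi$ on the $R^{s}$-block with the ordinary Hamming weight on the $\Z_2^{r}$-block.
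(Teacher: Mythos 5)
Your proof is correct and follows essentially the same route as the paper: pair an arbitrary nonzero codeword with $0$, use linearity so that $wt(c)=d(c,0)=\m$, then transfer to $\Phi(\C)$ via the distance/weight-preserving property of the Gray map. The extra details you supply (bijectivity of $\Phi$, the coordinatewise Lee-to-Hamming check, and the caveat about the four-codeword phrasing of equidistance) only elaborate what the paper's shorter proof takes for granted.
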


\begin{proof}
Let $0\neq c\in \C$. Since $\C$ is a $\ZZ$-linear code, we have $wt(c)=d(c,0)=\m$. And also, for the binary image $\Phi(\C)$, $wt_{H}(\Phi(\C))=d_{H}\left(\Phi(\C),0\right)=d(c,0)=wt(c)=\m.$
Here, we can also say that a one weight $\ZZ$-linear code $\C$ with weight $\m$ is also an equidistant code with distance $\m$.
\end{proof}

\begin{example}
It is worth to note that the dual of a one weight code is not necessarily a one weight code. Also it may be interesting to give in this example a codeword of weight 2.
Let $\C$ be a $\ZZ$-linear code of type $(2,2;0,1,0)$ with $\C=\langle (1,1|1+u,1+u)\rangle$. Then $\C=\{(0,0|0,0),(1,1|1+u,1+u),(1,1|1,1),(0,0|u,u)\}$ and $\C$ is a one weight code with weight $\m=4$. On the other hand, the dual code $\C^{\perp}$ is generated by $\langle(1,0|u,0),(0,1|u,0),(0,0|1,1)\rangle$ and of type $(2,2;2,1,0)$. But $d(\C^{\perp})=2$ and $\C^{\perp}$ is not a one weight code.
\end{example}

\begin{example}
Let $\C$ be a $\ZZ$-linear code with the standard form of the generator matrix $\left[\begin{array}{ccc|cc}
  1 & 0 & 1 & 0 & u \\ \hline
  0 & 1 & 1 & 1 & 1+u
\end{array}\right]$, then $\C$ is of type $(3,2;1,1,0)$ and one weight code with weight $4$. Furthermore, $\Phi(\C)$ is a binary linear code with parameters $[7,3,4]$. Here, note that the binary image of $\C$ is the binary simplex code of length $7$, which is the dual of the $[7,4,3]$ Hamming code.
\end{example}

Now, we give a theorem which gives a construction of one weight codes over $\Z_{2}^{r}\times R^{s}$.
\begin{theorem}

Let $\C$ be a one weight $\ZZ$-linear code of type $(r,s;k_{0},k_{1},k_{2})$ and weight $\m$. Then, one weight code of type $(\gamma r,\gamma s;k_{0},k_{1},k_{2})$ with weight $\gamma\m$ exists, where $\gamma$ is a positive integer.
\end{theorem}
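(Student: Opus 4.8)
The plan is to build the new code by concatenating $\gamma$ copies of $\C$ coordinate-wise. Concretely, given a generator matrix $G$ of $\C$ in the standard form of Theorem~\ref{Matrices}, I would form the matrix $G' = [\,G \mid G \mid \cdots \mid G\,]$ consisting of $\gamma$ horizontal copies of $G$, but with the columns rearranged so that all the $\Z_2$-parts of the $\gamma$ blocks come first (giving $\gamma r$ binary coordinates) followed by all the $R$-parts (giving $\gamma s$ coordinates over $R$). Let $\C'$ denote the $\ZZ$-linear code of length $(\gamma r, \gamma s)$ generated by $G'$. The map sending a codeword $c\in\C$ to its $\gamma$-fold repetition $(c,c,\dots,c)\in\C'$ (after the same coordinate permutation) is clearly an $R$-module isomorphism, since applying it commutes with the scalar multiplication $d(a,b)=(\eta(d)a, db)$ componentwise.

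Next I would check the type and the weight. Because the repetition map is an $R$-module isomorphism that preserves the property of a coordinate being ``free over $R$'' or of the form $ub$ with $a\ne 0$ or $a=0$ — these are all determined block-by-block and are unaffected by duplication — the subcodes $\C_s^F$, $\C_0$, $\C_1$ of $\C$ map onto the corresponding subcodes of $\C'$, so $\C'$ has type $(\gamma r,\gamma s;k_0,k_1,k_2)$. For the weight: if $0\ne c\in\C$ then its image is $(c,c,\dots,c)$, and since weight is additive over a disjoint coordinate partition — indeed $wt(v)=wt_H(v_1)+wt_L(v_2)$ generalizes to any concatenation — we get $wt((c,\dots,c)) = \gamma\, wt(c) = \gamma\m$. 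As $c$ ranges over the nonzero codewords of $\C$ this exhausts the nonzero codewords of $\C'$, so every nonzero codeword of $\C'$ has weight $\gamma\m$; hence $\C'$ is a one weight code with weight $\gamma\m$ of the desired type.

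I do not expect a genuine obstacle here; the only thing to be careful about is bookkeeping of the column permutation, so that the resulting matrix really is the generator matrix of a code over $\Z_2^{\gamma r}\times R^{\gamma s}$ (all binary coordinates grouped, then all $R$-coordinates grouped) rather than of a code with the blocks interleaved — though of course any such permutation is a monomial equivalence and does not change the weight distribution or the type. I would also remark that this construction is the $\ZZ$-analogue of the classical fact that the $\gamma$-fold repetition of a one weight code is again a one weight code, and I might close by noting that starting from a known one weight code (for instance the type $(3,2;1,1,0)$ code of the preceding example, whose Gray image is the $[7,3,4]$ simplex code) this yields infinite families of one weight $\ZZ$-linear codes.
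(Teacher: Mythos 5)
Your proposal is correct and follows essentially the same route as the paper: duplicate the generator matrix blockwise, $\bar{G}=[G_r\cdots G_r\mid G_s\cdots G_s]$, observe that the repetition map is an isomorphism preserving the type $(k_0,k_1,k_2)$, and that every nonzero codeword's weight is multiplied by $\gamma$. Your write-up is in fact slightly more careful than the paper's (which simply asserts that copying does not change the type), so there is nothing to correct.
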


\begin{proof}
Let $G=[G_{r}|G_{s}]$ be a generator matrix of $\ZZ$-linear code $\C$ of type $(r,s;k_{0},k_{1},k_{2})$ with weight $\m$, where $G_{r}$ denotes the first $r$ part of $G$ and $G_{s}$ denotes the last $s$ part of $G$. We can copy the first $r$ part and the last $s$ part of the generator matrix $G$ as
$\bar{G}=[\underbrace{G_{r}\cdots G_{r}}_\text{$\gamma$ times}| \underbrace{G_{s}\cdots G_{s}}_\text{$\gamma$ times}]$. Copying the rows of $G$ does not change the type but the length of the new code is $\gamma(r+s)$. Since the weight of any codeword in $\C$ is $\m$ then the new code has the weight $\gamma\m$.
\end{proof}

\begin{definition}
Let ${\mathcal{C}}$ be a ${\mathbb{Z}_{2}\mathbb{Z}_{2}[u]}$-linear code. Let $%
{\mathcal{C}}_{r}$ (respectively ${\mathcal{C}}_{s}$) be the punctured code
of ${\mathcal{C}}$ by deleting the coordinates outside $r$ (respectively $s$). If ${\mathcal{C}}={\mathcal{C}}_{r}\times {\mathcal{C}}_{s}$ then ${%
\mathcal{C}}$ is called separable.
\end{definition}

\begin{corollary}\label{corol1}
There do not exist separable one weight $\ZZ$-linear codes.
\end{corollary}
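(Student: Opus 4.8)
The approach I would take is a short proof by contradiction resting on the weight additivity recorded in the Preliminaries, $wt(v)=wt_{H}(v_{1})+wt_{L}(v_{2})$ for $v=(v_{1},v_{2})\in\Z_{2}^{r}\times R^{s}$. Suppose $\C$ is a separable one weight $\ZZ$-linear code with weight $\m$. By separability, $\C=\C_{r}\times\C_{s}$, where $\C_{r}$ is $\C$ punctured onto the first $r$ coordinates (a binary linear code) and $\C_{s}$ is $\C$ punctured onto the last $s$ coordinates (an $R$-linear code). Since $\C\neq\{0\}$ and all its nonzero codewords have weight $\m$, we have $\m\geq 1$.

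First I would pick a nonzero codeword $a\in\C_{r}$ and a nonzero codeword $b\in\C_{s}$. Because $\C=\C_{r}\times\C_{s}$ and both factors contain the zero word, the vectors $(a,0)$, $(0,b)$ and $(a,b)$ all belong to $\C$ and all are nonzero. The one weight hypothesis together with the weight formula then give $wt(a,0)=wt_{H}(a)=\m$ and $wt(0,b)=wt_{L}(b)=\m$, whence $wt(a,b)=wt_{H}(a)+wt_{L}(b)=2\m$. But $(a,b)$ is a nonzero codeword of the one weight code $\C$, so $wt(a,b)=\m$; therefore $2\m=\m$, i.e. $\m=0$, contradicting $\m\geq 1$. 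Hence no separable one weight $\ZZ$-linear code exists, and since $\Phi$ is distance preserving the analogous statement passes to $\Phi(\C)$.

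The delicate point, and what I expect to be the only real obstacle, is guaranteeing that both $\C_{r}$ and $\C_{s}$ are nonzero, so that the nonzero $a$ and $b$ above actually exist. If, say, $\C_{s}=\{0\}$, then $\C=\C_{r}\times\{0\}$ is merely a binary linear code embedded in $\Z_{2}^{r}\times R^{s}$ (and symmetrically if $\C_{r}=\{0\}$); such a code has an entire block of coordinates equal to zero and so is not a $\ZZ$-linear code of type $(r,s;k_{0},k_{1},k_{2})$ in the genuine mixed-alphabet sense studied here. I would therefore state this non-degeneracy explicitly (or invoke the standing convention that the code uses all of its coordinate positions), after which the contradiction above closes the argument in two lines.
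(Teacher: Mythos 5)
Your argument is essentially the paper's own proof: both exploit separability to place $(a,0)$, $(0,b)$ and $(a,b)$ in $\C$ and derive a contradiction from the additivity $wt(v)=wt_{H}(v_{1})+wt_{L}(v_{2})$ applied to a one weight code. Your explicit treatment of the degenerate case $\C_{r}=\{0\}$ or $\C_{s}=\{0\}$ (requiring the convention that the code genuinely uses both blocks) is a point the paper's proof passes over silently, but otherwise the two arguments coincide.
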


\begin{proof}
Let ${\mathcal{C}}={\mathcal{C}}_{r}\times {\mathcal{C}}_{s}$ be a separable one weight code over $\Z_{2}^{r}\times R^s$. Consider the codeword $0\neq(v,w)\in \Z_{2}^{r}\times R^s$ with weight $\m$. Since $\C$ is a linear code then we have both $0\in \C_{r}$ and $0\in \C_{s}$. Hence, $(v,0)$ and $(0,w)$ are elements of $\Z_{2}^{r}\times R^s$. If $v\neq0$ and $w\neq0$ then $wt(v,w)\neq wt(v,0)$ and $wt(v,w)\neq wt(0,w)$. So, we have a contradiction. As a result, there is no separable one weight $\ZZ$-linear code.
\end{proof}

\begin{lemma}
If $\C$ is a $\ZZ$-linear code of type $\ty$ with no all zero columns in the generator matrix of $\C$. Then the sum of the weights of all codewords of $\C$ is equal to $\frac{|\C|}{2}(r+2s)$.
\end{lemma}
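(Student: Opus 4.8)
The plan is to count the sum of weights by a column-by-column (coordinate-by-coordinate) argument, exploiting the fact that the code is an $R$-submodule and hence each coordinate projection is a subgroup of $\Z_2$ (for the first $r$ coordinates) or of $R$ (for the last $s$ coordinates). The quantity we want is $\sum_{c\in\C}wt(c)=\sum_{c\in\C}\sum_{j=0}^{r+s-1}wt(c_j)$, and I would swap the order of summation to get $\sum_{j}\bigl(\sum_{c\in\C}wt(c_j)\bigr)$, reducing everything to understanding, for each fixed coordinate $j$, the multiset of values $\{c_j : c\in\C\}$.

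\medskip

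First I would fix a binary coordinate $j\in\{0,\dots,r-1\}$ and consider the evaluation map $\pi_j:\C\to\Z_2$ sending $c\mapsto c_j$. This is a group homomorphism, so its image is either $\{0\}$ or all of $\Z_2$. The image is $\{0\}$ precisely when the $j$-th column of the generator matrix is all zero, which is excluded by hypothesis; hence $\pi_j$ is surjective and every fiber has size $|\C|/2$. Therefore exactly $|\C|/2$ codewords have $c_j=1$, each contributing Hamming weight $1$, so $\sum_{c\in\C}wt(c_j)=|\C|/2$ for each of the $r$ binary coordinates.

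\medskip

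Next I would fix an $R$-coordinate $j\in\{r,\dots,r+s-1\}$ and similarly look at $\pi_j:\C\to R$, now an $R$-module homomorphism whose image is a submodule of $R$, i.e.\ one of $\{0\}$, $uR=\{0,u\}$, or $R$. The "no all-zero column" condition rules out $\{0\}$. In the remaining two cases I compute $\sum_{c\in\C}wt_L(c_j)$ using the Lee weights $wt_L(0)=0$, $wt_L(1)=wt_L(1+u)=1$, $wt_L(u)=2$: if the image is $uR$, each of the two values is attained on $|\C|/2$ codewords, giving $(|\C|/2)\cdot 0+(|\C|/2)\cdot 2=|\C|$; if the image is all of $R$, each value is attained on $|\C|/4$ codewords, giving $(|\C|/4)(0+1+2+1)=|\C|$. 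Either way the contribution of each $R$-coordinate is exactly $|\C|$, i.e.\ $2\cdot|\C|/2$. Summing over all $r+s$ coordinates yields $r\cdot\frac{|\C|}{2}+s\cdot|\C|=\frac{|\C|}{2}(r+2s)$, as claimed.

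\medskip

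The only real subtlety—and the step I would state most carefully—is the claim that a column of the \emph{standard-form} generator matrix being zero is equivalent to the corresponding coordinate vanishing on all of $\C$; this is where the module structure (and the explicit standard form of Theorem \ref{Matrices}) is used, and one should note the statement presupposes the generator matrix under discussion is in, or row-equivalent to, standard form so that "no all zero columns" is a well-defined condition on $\C$ itself. Everything else is a routine finite computation, and one may equivalently phrase the whole argument through the Gray image $\Phi(\C)$, whose binary coordinates each split $|\C|/2$ to $|\C|/2$, recovering $\frac{|\C|}{2}n$ with $n=r+2s$ immediately.
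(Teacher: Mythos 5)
Your proposal is correct and takes essentially the same route as the paper: a column-by-column count in which each of the $r$ binary coordinates contributes $\frac{|\C|}{2}$ and each of the $s$ coordinates over $R$ contributes $|\C|$, whether its values run over $\{0,u\}$ or over all of $R$. The only difference is that you justify the equidistribution explicitly via the coordinate projection homomorphisms (and correctly rule out images like $\{0,1\}$ using the $R$-module structure), whereas the paper simply asserts these counts; your side remark about needing the standard form is unnecessary, since for any generating set a zero column is equivalent to the coordinate vanishing on all of $\C$.
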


\begin{proof}
Let $G$ be a matrix whose rows are all codewords of $\C$. Since $\C$ is a linear code, in the first $r$ columns, the number of coordinates containing $0$ is equal to the number of coordinates containing $1$. And for the last $s$ columns, the column contains either the same number of $0,1,u,1+u$ or the same number of $0$ and $u$. Assume that there are $x$ columns containing only $0$ and $u$ in the last $s$ columns. So, it remains $s-x$ columns containing $0,1,u,1+u$. Therefore, the sum of the all codewords of $\C$ is
\begin{eqnarray*}
\sum_{c\in \C}wt(c)&=&r\frac{|\C|}{2}+2x\frac{|\C|}{2}+(s-x)\left(\frac{|\C|}{4}+2\frac{|\C|}{4}+\frac{|\C|}{4}\right)\\
&=& r\frac{|\C|}{2}+s|\C|=\frac{|\C|}{2}(r+2s).
\end{eqnarray*}
\end{proof}

\begin{theorem}\label{weight}
Let $\C$ be a one weight $\ZZ$-linear code of type $\ty$ such that there are no zero columns in the generator matrix of $\C$. Then, if the weight is $\m$, $\m=\alpha~2^{(k_{0}+2k_{1}+k_{2})-1}$ where $\alpha$ is a positive integer satisfying $(r+2s)=\alpha\left(2^{k_{0}+2k_{1}+k_{2}}-1 \right)$. In addition, if $\m$ is an odd integer, then $r$ is also odd and $\C=\langle\underbrace{1\cdots 1}_\text{$r$ times}| \underbrace{u\cdots u}_\text{$s$ times}\rangle$.
\end{theorem}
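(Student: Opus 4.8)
The plan is to argue by counting, using the preceding lemma as the key input. First I would observe that the total number of codewords is $|\C| = 2^{k_0+2k_1+k_2}$, so the number of nonzero codewords is $2^{k_0+2k_1+k_2}-1$; since $\C$ has constant weight $\m$, summing the weights over all codewords gives $\sum_{c\in\C} wt(c) = \m\,(2^{k_0+2k_1+k_2}-1)$. On the other hand, the hypothesis that there are no zero columns in the generator matrix lets me apply the previous lemma, which says this same sum equals $\tfrac{|\C|}{2}(r+2s) = 2^{k_0+2k_1+k_2-1}(r+2s)$. Equating the two expressions yields
\begin{equation*}
\m\left(2^{k_0+2k_1+k_2}-1\right) = 2^{k_0+2k_1+k_2-1}(r+2s).
\end{equation*}
Since $2^{k_0+2k_1+k_2}-1$ is odd, it must divide $r+2s$; writing $r+2s = \alpha\left(2^{k_0+2k_1+k_2}-1\right)$ for a positive integer $\alpha$ and substituting back gives $\m = \alpha\,2^{k_0+2k_1+k_2-1}$, which is the first claim.

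For the second part, suppose $\m$ is odd. From $\m = \alpha\,2^{k_0+2k_1+k_2-1}$, oddness of $\m$ forces the power of $2$ to be trivial, i.e. $k_0+2k_1+k_2-1 = 0$, hence $k_0+2k_1+k_2 = 1$. Because $k_0,k_1,k_2$ are nonnegative integers, the only possibility is $k_1 = k_2 = 0$ and $k_0 = 1$ (note $k_1=1$ would give $2$, and $k_2$ contributing alone would also give $1$ but then $\C_1\neq 0$ would need examining — I should check that the type-$(r,s;0,0,1)$ case is incompatible with having no zero columns, or is subsumed; in fact a generator consisting of a single row of the form $(0\mid uD)$ has zero columns in the $r$-part unless $r=0$, and the statement's conclusion names a generator with a nonzero $\Z_2$-part, so I will argue $k_0=1$ is the genuine case). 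With $k_0=1,k_1=k_2=0$, the code is generated by a single codeword $(a\mid ub)$ with $a\in\Z_2^r$, $a\neq 0$ (from the definition of $\C_0$), but actually since $k_1=0$ there is no free part, so the single generator has the form $g=(a\mid c)$ where the $R$-part $c$ satisfies $uc' $-type constraints; I would pin down that $g$ must have every coordinate nonzero (no zero columns) and that $g$ generates a code of size $2$, forcing $2g = 0$, so every $R$-coordinate of $g$ lies in $\{0,u\}$, hence equals $u$, and every $\Z_2$-coordinate equals $1$. Thus $\C = \langle (\underbrace{1\cdots1}_{r}\mid\underbrace{u\cdots u}_{s})\rangle$. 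Finally, $\m = wt(1,\dots,1,u,\dots,u) = r + 2s$ is odd, which forces $r$ odd (since $2s$ is even), giving the last assertion.

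The step I expect to be the main obstacle is the bookkeeping in the second part: carefully ruling out the spurious solutions of $k_0+2k_1+k_2=1$ and showing that the "no zero columns" hypothesis together with $|\C|=2$ pins the generator down to the all-ones / all-$u$ vector. The arithmetic of the first part is essentially immediate once the counting lemma is invoked, but the classification of the odd-weight case requires being precise about which of $k_0,k_1,k_2$ can be nonzero and why a size-$2$ code with no zero columns and an odd-weight codeword must be exactly the repetition-type code. I would also want to double-check the edge behaviour when $r=0$ or $s=0$ (though Corollary~\ref{corol1} and the no-zero-column hypothesis together essentially exclude the degenerate separable situations), and note that $2g=0$ is automatic here because $|\C|=2$ means $\C\cong\Z_2$ as a group, so the generator is necessarily $2$-torsion.
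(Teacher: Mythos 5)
Your first part is correct and is essentially the paper's own argument: equate the constant-weight count $\m\,(|\C|-1)$ with the value $\frac{|\C|}{2}(r+2s)$ from the preceding lemma, and use that $|\C|-1=2^{k_0+2k_1+k_2}-1$ is coprime to $\frac{|\C|}{2}$ to extract the integer $\alpha$. The genuine problem is in the odd-weight classification, at the step ``$2g=0$, so every $R$-coordinate of $g$ lies in $\{0,u\}$.'' That inference is false: $R=\Z_{2}+u\Z_{2}$ has characteristic $2$ (it is not $\Z_{4}$), so $2x=0$ for \emph{every} $x\in R$, and hence every element of $\Z_{2}^{r}\times R^{s}$ is $2$-torsion; being $2$-torsion gives no information about the coordinates of $g$, and your argument would equally ``show'' that a generator with $R$-part consisting of units has entries in $\{0,u\}$. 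The coordinates must instead be pinned down by the $R$-module structure (or by the standard form), not by additive order: if some $R$-coordinate of $g$ were a unit, then $ug$ would be a nonzero codeword (that coordinate becomes $u\neq 0$) whose binary part is $\eta(u)a=0\neq a$, so $ug\notin\{0,g\}$, contradicting $|\C|=2$; equivalently, since $k_{1}=k_{2}=0$, Theorem \ref{Matrices} says the unique generator row has the form $(1\;A_{1}\mid uT)$, so its $R$-part automatically lies in $\{0,u\}$ --- which is in effect what your own first instinct (``$(a\mid ub)$ from the definition of $\C_{0}$'') already gave you before you replaced it by the $2$-torsion argument. With the $R$-part in $\{0,u\}$, the no-zero-columns hypothesis forces all binary entries equal to $1$ and all $R$-entries equal to $u$, and $\m=r+2s$ odd forces $r$ odd, as in the paper.

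A smaller point: you eliminate the alternative solution $(k_{0},k_{1},k_{2})=(0,0,1)$ of $k_{0}+2k_{1}+k_{2}=1$ via the no-zero-columns hypothesis, which requires $r\geq 1$ and leaves the $r=0$ case dangling, as you yourself note. The paper's route is uniform and closes this as well: a generator of $k_{2}$-type has all entries in $\{0,u\}$, hence every codeword it produces has even weight, contradicting $\m$ odd regardless of $r$. With these two repairs your proof coincides with the paper's.
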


\begin{proof}
Since $\C$ is a one weight code of weight $\m$ then the sum of the weights of all codewords of $\C$ is $\left(|\C|-1\right)\m$. And also we know from the above lemma that $\sum_{c\in\C}wt(c)=\frac{|\C|}{2}(r+2s)$, so we have
 $$\frac{|\C|}{2}(r+2s)=\left(|\C|-1\right)\m.$$
Further, since $\C$ is of type $\ty$ then it is isomorphic to an abelian group $\Z_{2}^{k_{0}}\times \Z_{2}^{2k_{1}}\times \Z_{2}^{k_{2}}$ then $|\C|=2^{k_{0}+2k_{1}+k_{2}}$. Also, note that $gcd\left(|\C|-1,\frac{|\C|}{2}\right)=1$. Hence, there is a positive integer $\alpha$, such that $\m=\alpha~\frac{|\C|}{2}=\alpha~2^{(k_{0}+2k_{1}+k_{2})-1}$ and $(r+2s)=\alpha~\left(|\C|-1\right)=\alpha~\left(2^{k_{0}+2k_{1}+k_{2}}-1\right)$.
Further, in the case where $\m$ is odd, then $\alpha~2^{k_{0}+2k_{1}+k_{2}-1}$ is odd. So, $\alpha$ is odd and $\alpha~2^{k_{0}+2k_{1}+k_{2}-1}=1$. Therefore, $k_{1}=0$ and $k_{0}+k_{2}=1$. Since $\m$ is odd then $k_{2}$ must be also zero. Because, the subcode generated by the rows of dimension $k_{2}$ only consists of $u$'s. So, if $k_{2}\neq 0$ then $\m$ is even and this is contradiction. Finally, $\alpha=\m=(r+2s)$ and again since $\m$ is odd then $r$ is also odd and only codeword that satisfies these conditions is $(\underbrace{1\cdots 1}_\text{$r$ times}| \underbrace{u\cdots u}_\text{$s$ times})$.
\end{proof}

The following theorem gives a complete classification of dual one weight $\ZZ$-linear codes.

\begin{theorem}
Let $\C$ be a one weight $\ZZ$-linear code of type $\ty$ and weight $\m$. If there is no zero columns in the generator matrix of $\C$, then $d(\C^{\perp})\geq2.$ Also, $d(\C^{\perp})\geq3$ if and only if $\alpha=1$.
\end{theorem}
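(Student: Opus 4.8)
The plan is to argue directly about the columns of the generator matrix $G$, using the standard tool for one weight (equidistant) codes: counting nonzero coordinates column by column. First I would establish $d(\C^{\perp})\geq 2$. Recall that $d(\C^{\perp})=1$ would mean there is a coordinate $i$ such that every codeword of $\C$ is zero in that coordinate (for a binary-type coordinate) or lies in $u\Z_2$ in that coordinate with the appropriate pairing — more precisely, $d(\C^{\perp})=1$ forces a column of $G$ that is ``degenerate'' in the sense of producing a trivial inner-product constraint. But the hypothesis explicitly rules out all-zero columns, and I would check that for the $R$-part a column that is identically $0$ or identically $u$ would either be an all-zero column (after the Gray map this is two zero columns) or would force, via the inner product $\langle v,w\rangle = u\sum a_id_i + \sum b_je_j$, a weight-one dual word only in the excluded degenerate situation. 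So the no-zero-column hypothesis gives $d(\C^{\perp})\geq 2$ immediately; this part is short.

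The substance is the equivalence $d(\C^{\perp})\geq 3 \iff \alpha = 1$. I would use the characterization of $d(\C^{\perp})\geq 3$ in terms of columns of $G$ being pairwise ``independent'': $d(\C^{\perp})\leq 2$ exactly when there exist two coordinates $i\neq j$ and a nonzero relation making the corresponding columns of $G$ dependent (in the appropriate $\Z_2$/$R$ sense compatible with the Gray image). Since $\C$ is a one weight code of weight $\m$ with no zero columns, Theorem~\ref{weight} tells us $(r+2s)=\alpha(2^{k_0+2k_1+k_2}-1)=\alpha(|\C|-1)$ and $\m=\alpha\,|\C|/2$. The key counting fact: each column of $G$ (viewed through the Gray map, so we have $n=r+2s$ binary columns) is a nonzero vector in a space of the appropriate dimension, and the total number of \emph{distinct} nonzero column-types available is $|\C|-1$ (the nonzero elements of the character group / dual space of $\C$, since a column determines a linear functional on $\C$). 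If $\alpha=1$ there are exactly $|\C|-1=n$ columns, so — combined with the one weight condition forcing each nonzero functional to appear and the weight arithmetic — every nonzero column-type occurs exactly once, hence no two columns coincide (even up to the scalar relations that would create a weight-$2$ dual word), giving $d(\C^{\perp})\geq 3$. Conversely, if $\alpha\geq 2$, then $n=\alpha(|\C|-1)>|\C|-1$, so by pigeonhole at least one column-type is repeated (or a scalar-related pair occurs), producing a codeword of weight $\leq 2$ in $\C^{\perp}$, i.e. $d(\C^{\perp})\leq 2$.

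I would carry this out in the order: (1) reduce ``$d(\C^\perp)\geq 2$'' to ``no zero column,'' handling the mixed-alphabet/Gray-map bookkeeping for the $R$-coordinates carefully; (2) set up the correspondence between columns of $G$ and nonzero functionals on $\C$, and note that the one weight hypothesis plus Theorem~\ref{weight} forces the multiset of columns to be exactly $\alpha$ copies of the full set of $|\C|-1$ nonzero functionals (this is the standard ``one weight $\Rightarrow$ generated by replication of the simplex code'' phenomenon, and is where Theorem~\ref{weight}'s arithmetic $(r+2s)=\alpha(|\C|-1)$ is used); (3) conclude both directions of the equivalence by comparing $\alpha=1$ versus $\alpha\geq 2$.

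The main obstacle I anticipate is step (2): showing that a one weight $\ZZ$-linear code with no zero columns must in fact be ($\alpha$-fold replication of) the ``$\ZZ$-simplex code,'' rather than merely having the right column count. One needs to check that if some nonzero functional were \emph{missing} from the column multiset while another appeared with multiplicity $>\alpha$, the weight could not stay constant — this requires the per-column weight contributions (Hamming weight $\tfrac{|\C|}2$ for a binary-type column, Lee weight contribution $|\C|$ for an $R$-type column carrying all of $0,1,u,1+u$, or $2\cdot\tfrac{|\C|}2$ for one carrying only $0,u$) to balance, and the argument from the Lemma preceding Theorem~\ref{weight} shows the \emph{average} weight is $\tfrac{|\C|}2(r+2s)$ which already equals $(|\C|-1)\m$; upgrading ``average'' to ``every nonzero functional appears uniformly'' is the delicate point and may need a direct argument examining pairs of coordinates or an appeal to the fact that in a one weight code every coordinate functional must be ``nondegenerate'' in the same quantitative way. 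Once that uniformity is in hand, the $\alpha=1$ versus $\alpha>1$ dichotomy for $d(\C^\perp)$ follows cleanly by the counting above.
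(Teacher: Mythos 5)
Your route is genuinely different from the paper's: the paper proves both claims by a direct MacWilliams computation, writing $W_{\C}(x,y)=x^{n}+(|\C|-1)x^{n-\m}y^{\m}$ and extracting the coefficients of $x^{n-1}y$ and $x^{n-2}y^{2}$ in $W_{\C^{\perp}}$, which come out to $0$ and $\frac{\alpha(\alpha-1)(|\C|-1)}{2}$ respectively; no structural information about the columns of $G$ is needed at all. Your proposal instead tries to identify the code (through the Gray map) with an $\alpha$-fold replication of a simplex-type code and read off $d(\C^{\perp})$ from repetitions among column-functionals. That could in principle give more (a structural classification), but as written it has a genuine gap, and you have correctly located it yourself: step (2), the claim that the one weight hypothesis forces every nonzero functional on $\C$ to occur as a column exactly $\alpha$ times, is never proved. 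The Lemma before Theorem \ref{weight} and the arithmetic $(r+2s)=\alpha(|\C|-1)$, $\m=\alpha\,|\C|/2$ only constrain the \emph{total} weight, i.e.\ the average over columns; they do not rule out one functional appearing with multiplicity $>\alpha$ while another is missing. This uniformity statement is essentially Bonisoli's theorem adapted to $\mathbb{Z}_{2}^{r}\times R^{s}$, and it needs its own argument (e.g.\ an averaging argument over maximal subcodes or an induction); without it the direction ``$\alpha=1\Rightarrow d(\C^{\perp})\geq 3$'' does not follow, since with $\alpha=1$ the column count $n=|\C|-1$ alone does not prevent a repeated or zero column-type.

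A second, smaller gap is the dictionary between low-weight words of $\C^{\perp}$ and column relations in the mixed alphabet. Here $d(\C^{\perp})$ is measured with Lee weight on the $R$-part, so a dual word consisting of a single $u$ in one $R$-coordinate already has weight $2$, and it exists precisely when that coordinate of every codeword lies in $\{0,u\}$; likewise ``no zero columns of $G$'' does not immediately exclude degenerate binary columns of $\Phi(\C)$ coming from an $R$-coordinate whose entries avoid $u$ and $1+u$. Your step (1) and the pigeonhole step both need this bookkeeping (which inner-product constraints have weight $1$, which have weight $2$, and how they correspond to equalities among Gray-image columns) written out; you gesture at it but do not do it. By contrast, the paper's MacWilliams computation sidesteps both issues entirely, which is why its proof is short; if you want to keep your approach, the work is precisely in proving the uniform-multiplicity (replicated simplex) statement and in fixing the column/dual-word correspondence for the mixed alphabet.
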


\begin{proof}
Since there is no zero columns in the generator matrix of $\C$ then from Theorem \ref{weight}, there is a positive integer $\alpha$ such that $\m=\alpha~2^{(k_{0}+2k_{1}+k_{2})-1}$ and $n=(r+2s)=\alpha~\left(2^{k_{0}+2k_{1}+k_{2}}-1\right)$. We can write the weight enumerator of the one weight linear code $\C$ as $W_{\C}(x,y)=x^{n}+\left(|\C|-1\right)x^{n-\m}y^{\m}$. Therefore, by Theorem \ref{Macw}, the weight enumerator for the dual code $\C^{\perp}$ is
\begin{eqnarray}\label{eq1}
W_{\C^{\perp}}(x,y)=\frac{1}{|\C|}\left( (x+y)^{n}+(|\C|-1)(x+y)^{n-\m}(x-y)^{\m}\right).
\end{eqnarray}
Let us calculate the coefficient of $x^{n-1}y$ as
\[
\frac{1}{|\C|} \left( {n\choose 1}+(|\C|-1)\left({n-\m\choose 1}-{\m\choose 1}\right) \right)=\frac{1}{|\C|}\left(n|\C|-2\m(|\C|-1)\right).
\]
Since, $|\C|=2^{k_{0}+2k_{1}+k_{2}}$, then $n=(r+2s)=\alpha~\left(2^{k_{0}+2k_{1}+k_{2}}-1\right)=\alpha~(|\C|-1)$ and $2\m=\alpha~|\C|$. Therefore, we have
\[
\frac{1}{|\C|}\left(n|\C|-2\m(|\C|-1)\right)=0,
\]
which means there is no codeword with weight $1$ in $\C^{\perp}$, so $d(\C^{\perp})\geq2$.

Now, let the coefficient of $x^{n-2}y^{2}$ be $\lambda$. Considering $n=\alpha(|\C|-1)$ and $\m=\frac{\alpha|\C|}{2}$, we have
\begin{eqnarray*}
\lambda&=&\frac{1}{|\C|} \left( {n\choose 2}+(|\C|-1)\left({n-\m\choose 2}+{\m\choose 2}-{n-\m\choose 1}{\m\choose 1}\right) \right)\\
&=&\frac{1}{|\C|}\left( \frac{n(n-1)}{2}+\frac{(|\C|-1)(n^{2}-4n\m+4\m^{2}-n)}{2}\right)\\
&=&\frac{\alpha(\alpha-1)(|\C|-1)}{2}.
\end{eqnarray*}
Therefore, $\lambda=0$ if and only if $\alpha=1$.
\end{proof}

\begin{corollary}
For $\alpha=1$, if $|\C|\geq4$ then $d(\C^{\perp})=3.$
\end{corollary}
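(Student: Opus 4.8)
The plan is to prove the corollary by exhibiting a codeword of weight exactly $3$ in $\C^{\perp}$. Since $\alpha=1$, the preceding theorem already gives $d(\C^{\perp})\geq 3$, so it suffices to show that the number of weight-$3$ codewords of $\C^{\perp}$ is nonzero when $|\C|\geq 4$; the whole argument then reduces to reading off one more coefficient of the dual weight enumerator (\ref{eq1}), exactly as the previous proof did for the $x^{n-1}y$ and $x^{n-2}y^{2}$ terms.

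First I would specialize to $\alpha=1$, so that $n=r+2s=|\C|-1$, $\m=|\C|/2$, and in particular $n-\m=\m-1$. Writing $\mu$ for the coefficient of $x^{n-3}y^{3}$ in $W_{\C^{\perp}}(x,y)$, expanding the two summands of (\ref{eq1}) by the binomial theorem yields
\[
\mu=\frac{1}{|\C|}\left({n\choose 3}+(|\C|-1)\sum_{j=0}^{3}(-1)^{j}{\m-1\choose 3-j}{\m\choose j}\right).
\]
The one step that needs an idea rather than bookkeeping is evaluating the alternating sum. I would handle it with generating functions: the sum is the coefficient of $t^{3}$ in $(1+t)^{\m-1}(1-t)^{\m}=(1-t^{2})^{\m-1}(1-t)$, and since $(1-t^{2})^{\m-1}$ contains only even powers of $t$, that coefficient equals $-\bigl(\text{coefficient of }t^{2}\text{ in }(1-t^{2})^{\m-1}\bigr)=\m-1$.

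Substituting $n=|\C|-1$ and $\m=|\C|/2$ and simplifying then gives $\mu=\dfrac{(|\C|-1)(|\C|-2)}{6}$; one can sanity-check that this is a non-negative integer because $|\C|$ is a power of two, and that it is consistent with the already-established vanishing of the lower-order coefficients. Hence $\mu>0$ whenever $|\C|\geq 4$, so $\C^{\perp}$ contains a weight-$3$ codeword and $d(\C^{\perp})\leq 3$. Combined with $d(\C^{\perp})\geq 3$ from the previous theorem, we conclude $d(\C^{\perp})=3$. The main (and essentially only) obstacle is the alternating-binomial identity; everything surrounding it is routine arithmetic.
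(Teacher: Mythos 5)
Your proposal is correct and follows essentially the same route as the paper: extract the coefficient of $x^{n-3}y^{3}$ from the MacWilliams identity (\ref{eq1}), simplify it under $\alpha=1$ to $\frac{(|\C|-1)(|\C|-2)}{6}$, and note this is positive for $|\C|\geq 4$, which together with $d(\C^{\perp})\geq 3$ gives $d(\C^{\perp})=3$. Your generating-function evaluation of the alternating binomial sum is just a tidier way of doing the same expansion the paper carries out directly, so there is no substantive difference.
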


\begin{proof}
Let the coefficient of $x^{n-3}y^3$ in Equation \ref{eq1} be $\kappa$. Then,

\begin{eqnarray*}
\kappa&=&\frac{1}{|\C|}\left( {n\choose3}+(|\C|-1)\left({n-\m\choose3}-{n-\m\choose2}{\m\choose1}+{n-\m\choose1}{\m\choose2}-{\m\choose3} \right)                       \right)\\
&=&\frac{1}{|\C|}\left(\frac{n(n-1)(n-2)+6(|\C|-1)(\m-1)}{6} \right).
\end{eqnarray*}
Again, considering $n=|\C|-1$ and $\m=\frac{|\C|}{2}$ (since $\alpha=1$), we have

\[
\kappa=\frac{(|\C|-1)(|\C|-2)}{6}.
\]
Since, $|\C|\geq4$ then $\kappa\geq1$, that is, $d(\C^{\perp})=3$.
\end{proof}

We have proved that if $\C$ is a one weight $\ZZ$-linear code of type $\ty$ and weight $\m$ then there is a positive integer $\alpha$ such that $\m=\alpha~2^{(k_{0}+2k_{1}+k_{2})-1}$, so the minimum distance for a one weight $\ZZ$-linear code must be even. In the following, we characterize the structure of $\ZZ$-linear codes.

\begin{theorem}\label{item}
Let $\C$ be a one weight $\ZZ$-linear code over $\Z_{2}^{r}\times R^{s}$ with generator matrix $G$ and weight $\m$.

\begin{itemize}
  \item [i)]
If $v=(a|b)$ is an any row of $G$, where $a=(a_{0},\ldots ,a_{r-1})\in \Z_{2}^r$ and $b=(b_{0},\ldots ,b_{s-1})\in R^{s}$, then the number of units(1 or $1+u$) in $b$ is either zero or $\frac{\m}{2}$.

  \item [ii)]
If $v=(a|b)$ and $w=(c|d)$ are two distinct rows of $G$, where $b$ and $d$ are free over $R^{s}$, then the coordinate positions where $b$ has units ($1$ or $1+u$) are the same that the coordinate positions where $d$ has units.

  \item [iii)]
If $v=(a|b)$ and $w=(c|d)$ are two distinct rows of $G$, where $b$ and $d$ are free over $R^{s}$, then $|\{j:(b_{j})=(d_{j})=1~or~1+u\}|= |\{j:(b_{j})=1,(d_{j})=1+u~or~(b_{j})=1+u,(d_{j})=1\}|=\frac{\m}{4}$.

\end{itemize}
\end{theorem}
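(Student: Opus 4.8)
The three items are naturally attacked in the order stated, since (iii) will rest on (ii).

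\textbf{Item (i).} The plan is simply to multiply the row $v=(a|b)$ by the scalar $u\in R$. By definition of the $R$-action $uv=(\eta(u)a,\,ub)=(0,\,ub)$, and since $u\cdot 1=u\cdot(1+u)=u$ while $u\cdot 0=u\cdot u=0$, the word $ub$ carries the entry $u$ in exactly the coordinates where $b$ has a unit and $0$ elsewhere. As $wt_{L}(u)=2$ this gives $wt(uv)=2t$, where $t$ is the number of units in $b$. Since $uv\in\C$, either $uv=0$ (forcing $t=0$) or $uv$ is a nonzero codeword and $wt(uv)=\m$, forcing $t=\m/2$. Note the argument works verbatim for any codeword, not just a row of $G$; in particular every \emph{free} codeword has exactly $\m/2$ units, a fact I will reuse.

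\textbf{Item (ii).} Let $v=(a|b)$, $w=(c|d)$ be distinct free rows and write $U_{b},U_{d}$ for their sets of unit coordinates, so $|U_{b}|=|U_{d}|=\m/2$ by (i) together with freeness. The plan is to assume $U_{b}\neq U_{d}$ and force a contradiction. First, $u(v+w)=uv+uw=(0,\,u(b+d))$ carries a $u$ in precisely the coordinates of $U_{b}\triangle U_{d}$, and it is a nonzero codeword (since $U_{b}\neq U_{d}$), so $2|U_{b}\triangle U_{d}|=\m$; combined with $|U_{b}|=|U_{d}|=\m/2$ this pins down $|U_{b}\cap U_{d}|=|U_{b}\setminus U_{d}|=|U_{d}\setminus U_{b}|=\m/4$. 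I would then bring in the codewords $v+w$ and $(1+u)v+w$: writing the Lee weight of each of $b+d$ and $(1+u)b+d$ as a sum over the four coordinate types (both $b_{j},d_{j}$ units; only $b_{j}$ a unit; only $d_{j}$ a unit; neither), and using $wt(v)=wt(w)=wt(v+w)=\m$ to express $wt_{H}(a),wt_{H}(c),wt_{H}(a+c)$, one evaluates $wt((1+u)v+w)$ and matches it against $\m$. The step I expect to be the real obstacle is extracting \emph{enough} independent relations to contradict $U_{b}\neq U_{d}$: several of the obvious combinations ($v+uw$, $uv+w$, $(1+u)(v+w)$, \dots) turn out to be automatically consistent, so the argument likely also needs the Gray image $\Phi(\C)$, which by Theorem~\ref{weight} is a binary one weight code of length $\alpha(2^{k_{0}+2k_{1}+k_{2}}-1)$ and weight $\alpha\,2^{k_{0}+2k_{1}+k_{2}-1}$, together with the resulting rigidity of its coordinate columns (equivalently, that $G$ may be taken in the standard form of Theorem~\ref{Matrices}).

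\textbf{Item (iii).} Granting (ii), write $U:=U_{b}=U_{d}$ with $|U|=\m/2$ and split $U$ into $P=\{j\in U:b_{j}=d_{j}\}$ and $Q=\{j\in U:b_{j}\neq d_{j}\}=\{j:\{b_{j},d_{j}\}=\{1,1+u\}\}$; these are exactly the two sets in the statement and $|P|+|Q|=\m/2$. On $P$ the coordinate of $b+d$ is $0$, on $Q$ it is $u$, and off $U$ both $b_{j},d_{j}\in\{0,u\}$, so $b+d$ has no units and $wt_{L}(b+d)=2|Q|+2\rho$, where $\rho$ counts coordinates outside $U$ with $b_{j}+d_{j}=u$. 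Likewise $(1+u)b+d$ equals $0$ on $Q$ (since $(1+u)b_{j}=d_{j}$ there), equals $u$ on $P$ (sum of the two distinct units), and agrees with $b+d$ off $U$, so $wt_{L}((1+u)b+d)=2|P|+2\rho$; and the $\Z_{2}$-parts of $v+w$ and $(1+u)v+w$ coincide because $\eta(1+u)=1$. Hence $wt((1+u)v+w)-wt(v+w)=2|P|-2|Q|$, and since both are nonzero codewords of weight $\m$ we obtain $|P|=|Q|=\m/4$, as claimed. (The one point to verify is that $v+w$ and $(1+u)v+w$ are genuinely nonzero, which holds whenever $G$ is taken in standard or minimal form.)
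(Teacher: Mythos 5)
Your item (i) is exactly the paper's argument (multiply the row by $u$, use linearity and the fact that $wt_{L}(u)=2$), and your item (iii), granted (ii), is essentially the paper's argument as well: the paper also plays $v+w$ against $v+(1+u)w$ and concludes that the two unit-agreement counts coincide; your version is the more carefully bookkept of the two.

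The genuine gap is item (ii), and you have located it yourself: you never prove it, you only record that the obvious codewords $u(v+w)$, $v+w$, $v+(1+u)w$, $(1+u)v+w$ yield relations that are ``automatically consistent'' and express the hope that the Gray image or the standard form of Theorem \ref{Matrices} supplies the missing rigidity. That hope cannot be realized, because the relations you derived ($|U_{b}\cap U_{d}|=\m/4$ and $|P|=|Q|$, in your notation) are the most that follows, and statement (ii) itself fails as stated. Take $s=15$ (with trivial binary part, or with zero binary columns appended, which the theorem does not exclude), let the $R$-columns of $G$ be the $15$ nonzero pairs $(g_{1},g_{2})\in R^{2}$, and let $v,w$ be the two rows. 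For every $(\lambda,\mu)\neq(0,0)$ the codeword $\lambda v+\mu w$ has weight $\sum_{(g_{1},g_{2})}wt_{L}(\lambda g_{1}+\mu g_{2})=16$: if $\lambda$ or $\mu$ is a unit each fixed $g_{2}$ contributes a full run through $R$, i.e.\ $0+1+2+1=4$, and if $\lambda,\mu\in\{0,u\}$ exactly eight coordinates equal $u$. So this is a one weight code with $\m=16$ and two free rows, yet $v$ has a unit where $w$ does not (the column $(1,0)$), and the two sets in (iii) have size $2$ each, not $\m/4=4$. Your proposed rescue via Theorem \ref{weight} also cannot work: the example satisfies it with $\alpha=2$, and its Gray image is a perfectly good one weight $[30,4,16]$ binary code (a replicated simplex code).

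You should also know that the paper's own proof of (ii) breaks at precisely the step you could not reproduce: it asserts that $v+(1+u)w$ weighs strictly more than $\m$ because it acquires $u$'s in the positions where $b_{j}=d_{j}$ is a unit, but it overlooks that in the positions where $b_{j}\neq d_{j}$ are both units the coordinate simultaneously drops from $u$ to $0$; the balanced count $|P|=|Q|$ is all that follows, exactly as in your computation, and it is consistent with $U_{b}\neq U_{d}$. Consequently parts (ii) and (iii), and with them Theorem \ref{Teo20} and its later use for cyclic codes, require an additional hypothesis (the example above has $k_{1}=2$). So: (i) and the conditional argument for (iii) are fine and match the paper, but the proposal does not, and indeed cannot, close item (ii) as stated.
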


\begin{proof}
\begin{itemize}
\item[i)]
The weight of $v=(a|b)$ is $wt(v)=wt_{H}(a)+wt_{L}(b)=\m$. Since $\C$ is linear $uv=(0|ub)$ is also in $\C$ then, if $ub=0$ then $b$ does not contain units. If $ub\neq0$, then $wt(v)=\m=0+wt_{L}(ub)$ and therefore, $wt_{L}(ub)=2|\{j:(b)_{j}=1~\text{or}~1+u\}|=\m$. Hence, the number of units in $b$ is $\frac{\m}{2}$.

\item[ii)]

Multiplying $v$ and $w$ by $u$ we have, $uv=(0|ub)$ and $uw=(0|ud)$. If $v$ and $w$ have units in the same coordinate positions then we get $uv+uw=0$. So, assume that they have some units in different coordinates. Since $\C$ is a one weight code with weight $\m$, if $uv+uw\neq0$ then the number of coordinates where $b$ and $d$ have units in different places must be $\frac{\m}{2}$. To obtain this, the number of coordinates where $\{(b_{j})=1=(d_{j})\}$ and $\{(b_{j})=1+u=(d_{j})\}$ has to be $\frac{\m}{2}$, and in all other coordinates where $\{(b_{j})=1~or~1+u\}$ we need $\{(d_{j})=0~or~u \}$, and also in all other coordinates where $\{(b_{j})=0~or~u\}$ we need $\{(d_{j})=1~or~1+u\}$. Hence, consider the vector $v+(1+u)w$. This vector has the same weight as $v+w$ in the first $r$ coordinates but for the last $s$ coordinates, it has $u's$ in the coordinates where $\{(b_{j})=1=(d_{j})\}$ and $\{(b_{j})=1+u=(d_{j})\}$, so its weight is greater than $\m$. This contradiction gives the result.

\item[iii)]
Let $x=v+w$ and $y=v+(1+u)w$ be two vectors in $\C$. The binary parts of these two vectors are the same, and for the coordinates over $R^{s}$ we know from ii) that $v$ and $w$ have units in the same coordinate positions, and for the all other coordinates in $R^{s}$, the values of $x$ and $y$ are the same. Therefore, the sum of the weights of the units in $v$ must be same in $x$ and $y$. So, they also have the same number of coordinates with $u$. But this is only possible if $|\{j:(b_{j})=(d_{j})=1~or~1+u\}|= |\{j:(b_{j})=1,(d_{j})=1+u~or~(b_{j})=1+u,(d_{j})=1\}|$. We also know from i) that the number of units in $v$ is $\frac{\m}{2}$, so we have the result.

\end{itemize}
\end{proof}

\begin{theorem}\label{Teo20}
Let $\C$ be a one weight code of type $\ty$. Then $k_{1}\leq1$ and $\C$ has the following standard form of the generator matrices.
If $k_{1}=0$ then
\begin{equation*}
G= \left[
\begin{array}{cc|cc}
I_{k_{0}} & A_{1}  & 0 & uT \\ \hline
0 & 0 & uI_{k_{2}} & uD
\end{array}%
\right].
\end{equation*}

If $k_{1}=1$ then
\begin{equation*}
G= \left[
\begin{array}{cc|ccc}
I_{k_{0}} & A_{1} & 0 & 0 & uT \\ \hline
0 & s & 1 & a & b_{1}+ub_{2} \\
0 & 0 & 0 & uI_{k_{2}} & uD
\end{array}
\right]
\end{equation*}
where $s,a,b_{1},b_{2}$ are vectors over $\Z_{2}.$
\end{theorem}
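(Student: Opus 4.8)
The plan is to obtain Theorem~\ref{Teo20} from the general standard form of Theorem~\ref{Matrices} together with the row restrictions in Theorem~\ref{item}; the heart of the matter is to prove $k_1\le 1$, after which the two displayed generator matrices are simply that general form specialized to $k_1=0$ and $k_1=1$. So I would first fix a generator matrix of $\C$ in standard form
\[
G= \left[
\begin{array}{cc|ccc}
I_{k_{0}} & A_{1} & 0 & 0 & uT \\ \hline
0 & S & I_{k_{1}} & A & B_{1}+uB_{2} \\
0 & 0 & 0 & uI_{k_{2}} & uD
\end{array}
\right],
\]
and observe that the $k_1$ rows of the middle block are precisely the rows whose $R^{s}$-component carries a unit (coming from the $I_{k_1}$ block), while the rows of the first and third blocks have their $R^{s}$-components inside $uR^{s}$. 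Thus the middle-block rows are exactly the rows that are free over $R^{s}$, and Theorem~\ref{item} can be applied to pairs of them.

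Next I would argue $k_1\le 1$ by contradiction. Assume $k_1\ge 2$ and let $v=(a\mid b)$ and $w=(c\mid d)$ be the $i$-th and $j$-th rows of the middle block, $i\ne j$. Both are free over $R^{s}$; since $b$ and $d$ each contain the unit $1$ from the $I_{k_1}$ block, Theorem~\ref{item}(i) gives that each has exactly $\tfrac{\m}{2}$ unit coordinates, and Theorem~\ref{item}(ii) gives that the units of $b$ and those of $d$ occur in exactly the same coordinate positions. But within the $I_{k_1}$ columns, $b$ has the unit $1$ in position $i$ and $0$ in position $j$, while $d$ has $0$ in position $i$ and $1$ in position $j$; hence the unit-position sets of $b$ and $d$ cannot agree. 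This contradiction forces $k_1\le 1$.

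Finally I would read off the two generator matrices. If $k_1=0$, the middle row-block and the $I_{k_1}$ column-block vanish and $S,A,B_1,B_2$ become empty, so $G$ reduces to the stated $k_1=0$ matrix. If $k_1=1$, the middle block is a single row, $I_{k_1}=[\,1\,]$, and $S,A,B_1,B_2$ become the row vectors $s,a,b_1,b_2$ over $\Z_2$, giving the stated $k_1=1$ matrix. I expect the only delicate point to be the beginning of the contradiction step: making precise that every middle-block row qualifies as ``free over $R^{s}$'' so that Theorem~\ref{item}(ii) genuinely applies, and isolating the decisive unit inside the $I_{k_1}$ block; the remainder is routine bookkeeping on the standard form. (If one wanted further information on $s,a,b_1,b_2$, Theorem~\ref{item}(iii) is available, but it is not needed for the statement as given.)
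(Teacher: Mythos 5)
Your proposal is correct and takes essentially the same approach as the paper: both deduce $k_{1}\leq 1$ from Theorem~\ref{item}(ii) (any two free rows must have their units in identical coordinate positions) and then simply specialize the standard form of Theorem~\ref{Matrices} to $k_{1}=0$ and $k_{1}=1$. The only cosmetic difference is that you obtain the contradiction directly from the $I_{k_{1}}$ block of the standard form, whereas the paper adds the first free row to the remaining free rows to reduce the generator matrix to a single free row.
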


\begin{proof}
From Theorem \ref{item} i), we know that any two distinct free vectors have their units in the same coordinate positions. So, if we add the first free row of the generator matrix to the other rows, we have only one free row in the generator matrix. Hence, $k_{1}\leq1$ and considering this and using the standard form of the generator matrix for a $\ZZ$-linear code $\C$ given in Theorem \ref{Matrices}, we have the result.
\end{proof}

\section{One Weight $\ZZ$-cyclic Codes}

In this section, we study the structure of one weight $\ZZ$-cyclic codes. At the beginning, we give some fundamental definitions and theorems about $\ZZ$-cyclic codes. This information about $\ZZ$-cyclic codes was given in \cite{ieee}, with details.

\begin{definition}
An $R$-submodule $\mathcal{C}$ of $\mathbb{Z}_{2}^{r}\times R^{s}$
is called a $\mathbb{Z}_{2}\mathbb{Z}_{2}[u]$-cyclic code
if for any codeword $v=\left( a_{0},a_{1},\ldots ,a_{r-1},b_{0},b_{1},\ldots, b_{s-1}\right) \in \mathcal{C},$ its
cyclic shift
\begin{equation*}
T(v)=\left( a_{r-1},a_{0},\ldots, a_{r-2},b_{s-1},b_{0},\ldots, b_{s-2}\right)
\end{equation*}
is also in $\mathcal{C}$.
\end{definition}

Any codeword $c=\left( a_{0},a_{1},\ldots, a_{r-1},b_{0},b_{1},\ldots, b_{s-1}\right) \in\mathbb{Z}_{2}^{r}\times
R^{s}$ can be identified with a module element such that
\begin{eqnarray*}
c(x) &=&\left( a_{0}+a_{1}x+\ldots +a_{r-1}x^{r-1},b_{0}+b_{1}x+\ldots +b_{s-1}x^{s-1}\right) \\
&=&\left( a(x),b(x)\right)
\end{eqnarray*}
in $R_{r,s}=\mathbb{Z}_{2}[x]/\left( x^{r}-1\right) \times
R[x]/\left( x^{s}-1\right).$ This identification gives a one-to-one correspondence
between elements in $\mathbb{Z}_{2}^{r}\times R^{s}$ and elements
in $R_{r,s}.$

\begin{theorem}\cite{ieee}
\label{Main}Let $\mathcal{C}$ be a $\ZZ$-cyclic code in $R_{r,s }.$ Then we can identify $
\mathcal{C}$ uniquely as $\mathcal{C}=\langle \left(
f(x),0\right) ,\left( l(x),g(x)+ua(x)\right) \rangle ,$ where $f(x)|\left(
x^{r}-1\right)(\bmod~2),$ and $a(x)|g(x)|\left( x^{s}-1\right) \left(\bmod~2\right),$ and $l(x)$ is
a binary polynomial satisfying $\deg (l(x))<\deg (f(x)),$ $f(x)|\left(
\dfrac{x^{s}-1}{a(x)}\right) l(x)\left(\bmod~2 \right)$ and $
f(x)\neq\left( \dfrac{x^{s}-1}{a(x)}\right)l(x)\left(\bmod~2\right)$.
\end{theorem}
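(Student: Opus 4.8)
The plan is to use the $R[x]$-module structure of $\mathcal{C}$ inside $R_{r,s}$: the cyclic shift $T$ is multiplication by $x$ and scalar multiplication by $u$ acts through $\eta$, so a $\ZZ$-cyclic code is exactly an $R[x]$-submodule of $R_{r,s}=\mathbb{Z}_{2}[x]/(x^{r}-1)\times R[x]/(x^{s}-1)$. The skeleton is the familiar ``projection plus kernel'' decomposition (as for $\Z_{2}\Z_{4}$-cyclic codes): introduce the $R[x]$-module homomorphism $\Psi:\mathcal{C}\to R[x]/(x^{s}-1)$, $\Psi(a(x),b(x))=b(x)$, treat its kernel and image separately, and then reassemble $\mathcal{C}$ from a generator of each.

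First I would describe the kernel. The set $\ker\Psi=\{(a(x),0)\in\mathcal{C}\}$, identified with $\{a(x):(a(x),0)\in\mathcal{C}\}$, is an ideal of $\mathbb{Z}_{2}[x]/(x^{r}-1)$ since $\mathcal{C}$ is closed under multiplication by $x$ and by $\mathbb{Z}_{2}$ (multiplication by $u$ annihilates it). Thus it is a binary cyclic code of length $r$, hence principal: $\ker\Psi=\langle(f(x),0)\rangle$ with $f(x)\mid(x^{r}-1)\pmod 2$, and $f(x)$ its unique monic generator. Next, $\Psi(\mathcal{C})$ is an $R[x]$-submodule of $R[x]/(x^{s}-1)$, i.e.\ a cyclic code over $R=\mathbb{Z}_{2}+u\mathbb{Z}_{2}$; invoking the established classification of such codes, $\Psi(\mathcal{C})=\langle g(x)+ua(x)\rangle$ with $a(x)\mid g(x)\mid(x^{s}-1)\pmod 2$ in standard form. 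Picking any preimage $(l(x),g(x)+ua(x))\in\mathcal{C}$ of this generator, a routine module computation gives $\mathcal{C}=\langle(f(x),0),(l(x),g(x)+ua(x))\rangle$: for $(a(x),b(x))\in\mathcal{C}$, write $b(x)=m(x)(g(x)+ua(x))$ and subtract $m(x)\cdot(l(x),g(x)+ua(x))$ to land in $\ker\Psi$, then use $(f(x),0)$.

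It remains to normalise $l(x)$ and record the relations. Replacing $l(x)$ by $l(x)\bmod f(x)$ (i.e.\ subtracting a multiple of $(f(x),0)\in\mathcal{C}$) achieves $\deg(l(x))<\deg(f(x))$. For the divisibility, note $\frac{x^{s}-1}{a(x)}$ annihilates $g(x)+ua(x)$ in $R[x]/(x^{s}-1)$: since $a(x)\mid g(x)$ the non-$u$ part is a multiple of $x^{s}-1$, and the $u$ part is $u(x^{s}-1)$. Hence $\frac{x^{s}-1}{a(x)}\cdot(l(x),g(x)+ua(x))=\left(\frac{x^{s}-1}{a(x)}l(x),0\right)\in\ker\Psi$, and because $f(x)\mid x^{r}-1$ this membership is equivalent to $f(x)\mid\frac{x^{s}-1}{a(x)}l(x)\pmod 2$ as polynomials. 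The inequality $f(x)\neq\frac{x^{s}-1}{a(x)}l(x)\pmod 2$ is the non-degeneracy condition ensuring the description is genuinely two-generated: were equality to hold (after reduction mod $x^{r}-1$), then $(f(x),0)$ would already lie in $\langle(l(x),g(x)+ua(x))\rangle$ and be redundant. For uniqueness, $f(x)$ is the unique monic generator of $\ker\Psi$, the pair $(g(x),a(x))$ in standard form is uniquely determined by $\Psi(\mathcal{C})$, and then $l(x)$ is pinned down by $\deg(l(x))<\deg(f(x))$, since two admissible representatives differ by an element of $\langle f(x)\rangle$ of degree less than $\deg f(x)$, hence coincide.

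The projection/kernel decomposition is the routine part. The main obstacle I anticipate is twofold: ensuring that the classification of cyclic codes over $\mathbb{Z}_{2}+u\mathbb{Z}_{2}$ applies in the required generality (notably for even $s$, where $x^{s}-1$ has repeated irreducible factors modulo $2$ and the ideal lattice of $R[x]/(x^{s}-1)$ becomes delicate), and pinning down exactly the constraints on $l(x)$ — distinguishing the divisibility $f(x)\mid\frac{x^{s}-1}{a(x)}l(x)$ from the separating inequality, and verifying that these together with $\deg(l(x))<\deg(f(x))$ are not just necessary but also sufficient for an arbitrary choice of $f,g,a,l$ of the stated shape to define a $\ZZ$-cyclic code.
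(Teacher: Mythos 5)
Your proof is essentially correct and follows the same route as the source: the paper states this theorem without proof, citing \cite{ieee}, and the argument there is exactly your decomposition via the projection onto $R[x]/(x^{s}-1)$, with the kernel a binary cyclic code $\langle f(x)\rangle$ in $\mathbb{Z}_{2}[x]/(x^{r}-1)$ and the image a cyclic code $\langle g(x)+ua(x)\rangle$ over $\mathbb{Z}_{2}+u\mathbb{Z}_{2}$, followed by the same normalization of $l(x)$ modulo $f(x)$ and the annihilator computation giving $f(x)\mid \frac{x^{s}-1}{a(x)}l(x)$. The caveat you flag yourself is real but harmless in context: the single-generator standard form of the image (hence the theorem) relies on $s$ being odd, which is the standing assumption in \cite{ieee} and is stated explicitly later in this paper.
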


Considering the theorem above, the type of $\mathcal{C}=\langle \left(
f(x),0\right) ,\left( l(x),g(x)+ua(x)\right) \rangle$ can be written in terms of the degrees of the polynomials $f(x),~a(x)$ and $g(x)$. Let $t_{1}=\deg f(x),~t_{2}=\deg g(x)$ and $t_{3}=\deg a(x)$. Then ${\mathcal{C}}$ is
of type (\cite{ieee})
\begin{eqnarray*}
\left( r,s;r-t_{4},s-t_{2},t_{2}+t_{4}-t_{1}-t_{3}\right)
\end{eqnarray*}
where $d_{1}(x)=\gcd \left( f(x),\dfrac{x^{s}-1}{g(x)}l(x)\right) $ and $t_{4}=\deg d_{1}(x)$.

\begin{corollary}\label{Cor}
If $\C$ is a one weight cyclic code generated by $\left(l(x),g(x)+ua(x)\right)\in R_{r,s}$ with weight $\m$ then $\m=2s$.
\end{corollary}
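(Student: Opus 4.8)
The plan is to determine $g(x)$ from the one weight hypothesis and then read off an explicit codeword of weight $2s$. Write the generator as $c=(l(x),\,g(x)+ua(x))$; by Theorem~\ref{Main} we have $a(x)\mid g(x)\mid(x^{s}-1)$ over $\Z_{2}$, so $\deg g(x)\le s$, and by the type formula following Theorem~\ref{Main} the free dimension of $\C$ is $k_{1}=s-\deg g(x)$.

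Since $\C$ is one weight, Theorem~\ref{Teo20} forces $k_{1}\le 1$, i.e.\ $\deg g(x)\ge s-1$. Over $\Z_{2}$ the only divisor of $x^{s}-1$ of degree $s-1$ is $(x^{s}-1)/(x-1)=1+x+\cdots+x^{s-1}$, and the only one of degree $s$ is $x^{s}-1$ itself. In the main case, where $g(x)+ua(x)$ genuinely has a unit coordinate (i.e.\ $g(x)\not\equiv 0\pmod{x^{s}-1}$), we have $\deg g(x)=s-1$, which pins down $g(x)=1+x+\cdots+x^{s-1}$. Then
\[
u\cdot c=\bigl(0,\;u(1+x+\cdots+x^{s-1})+u^{2}a(x)\bigr)=(0,u,u,\dots,u)\in\C,
\]
a nonzero codeword of weight $wt_{L}(u,\dots,u)=2s$; since all nonzero codewords of $\C$ have the common weight $\m$, this gives $\m=2s$, realized by the extremal all-$u$'s word.

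For completeness one should also dispose of the degenerate possibility $g(x)\equiv 0$, where $c=(l(x),ua(x))$ with $a(x)\mid x^{s}-1$ and every codeword has the form $\bigl(p(x)l(x)\bmod(x^{r}-1),\;u\,p(x)a(x)\bmod(x^{s}-1)\bigr)$ for $p(x)\in\Z_{2}[x]$. Here $u\cdot c=0$, so the quick argument breaks; instead I would use that $\C$ is not separable (Corollary~\ref{corol1}), hence has no all-zero column, to force $l(x)\not\equiv0$ and $a(x)\ne x^{s}-1$, take $p(x)=1+x+\cdots+x^{s-1}$ to recover the all-$u$'s codeword whenever $wt_{H}(a(x))$ is odd (again giving $\m=2s$), and — in the remaining case $(x-1)\mid a(x)$ — combine the codeword $\frac{x^{s}-1}{a(x)}\cdot c=\bigl(\frac{x^{s}-1}{a(x)}l(x)\bmod(x^{r}-1),\,0\bigr)$ with the condition $f(x)\ne\frac{x^{s}-1}{a(x)}l(x)$ from Theorem~\ref{Main} to exhibit a codeword of the wrong weight, ruling this configuration out.

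The step I expect to be the main obstacle is precisely that last one, $g(x)\equiv 0$ with $wt_{H}(a(x))$ even: there the $R^{s}$-component degenerates under both multiplication by $u$ and summation of cyclic shifts, so one must carefully combine the one weight property, non-separability, and the divisibility conditions of Theorem~\ref{Main}. Away from it the argument is a three-line computation, and the witness of the weight is always the word $(0,u,\dots,u)$.
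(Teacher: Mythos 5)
Your main case is exactly the paper's argument: Theorem~\ref{Teo20} gives $k_{1}\le 1$, the type formula following Theorem~\ref{Main} gives $k_{1}=s-\deg g(x)$, so (when $g(x)\not\equiv 0$) $\deg g(x)=s-1$ forces $g(x)=1+x+\cdots+x^{s-1}$, and multiplying the generator by $u$ produces the codeword $(0,\ldots,0\,|\,u,\ldots,u)$ of weight $2s$, whence $\m=2s$. Up to that point there is nothing to add.

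The trouble is your ``for completeness'' case $g(x)\equiv 0\pmod{x^{s}-1}$. The paper tacitly excludes it (its proof assumes $\deg g(x)=s-1$ outright, and the theorem closing Section~4 is explicitly stated only for $g(x)\neq 0$), and it must be excluded, because the corollary is false in that regime: the paper's own examples are one weight cyclic codes generated by a single $(l(x),g(x)+ua(x))$ with $g(x)=x^{s}-1\equiv 0$ and weight different from $2s$. For instance, the code in $R_{15,15}$ with $g(x)=x^{15}-1$ and $a(x)=1+x^{3}+x^{4}+x^{6}+x^{8}+x^{9}+x^{10}+x^{11}$ is one weight with $\m=24\neq 30=2s$; here $(x+1)\mid a(x)$, so this is precisely a configuration your last step proposes to ``rule out,'' and the entries of Table~\ref{table} with $g(x)=x^{21}-1$ and $g(x)=x^{31}-1$ behave the same way. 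So that part of your plan cannot succeed. Moreover, even in your subcase where $wt_{H}(a(x))$ is odd, the codeword obtained from $p(x)=1+x+\cdots+x^{s-1}$ has first component $p(x)l(x)\bmod (x^{r}-1)$, which need not vanish, so $\m=2s$ does not follow from it. The correct reading is that the corollary carries the implicit hypothesis $g(x)\neq 0$ (equivalently $k_{1}=1$, $\deg g(x)=s-1$); under that hypothesis your first paragraph is a complete proof and coincides with the paper's.
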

 \begin{proof}
 We know from Theorem \ref{Teo20} that if $\C$ is a one weight $\ZZ$-linear code then $k_{1}$, which generates the free part of the code, is less than or equal to $1$. So, in the case where $\C$ is cyclic, it means that $s-t_{2}\leq1$, where $t_{2}=\deg g(x)$. Therefore we have $\deg g(x)=s-1$ and the polynomial $g(x)+ua(x)$ generates the vector with all unit entries and length $s$. If we multiply the whole vector (length$=r+s$) by $u$, then we have a vector with all entries $0$ in the first $r$ coordinates and all coordinates $u$ in the last $s$ coordinates. So the weight of this vector is $2s.$ Hence the weight of $\C$ must be $2s$.
\end{proof}
\begin{theorem}\cite{ieee}
\label{Theorem15}Let $\mathcal{C}=\langle \left( f(x),0\right) ,\left(
l(x),g(x)+ua(x)\right) \rangle $ be a cyclic code in $R_{r,s}$
where $f(x),~l(x),~g(x)$ and $a(x)$ are as in Theorem \ref{Main} and $
f(x)h_{f}(x)=x^{r}-1$, $g(x)h_{g}(x)=x^{s}-1$, $g(x)=a(x)b(x)$.

Let
\begin{equation*}
S_1=\bigcup_{i=0}^{deg(h_{f})-1}\left\{x^i \ast \left( f(x),0\right)\right\},
\end{equation*}

\begin{equation*}
S_{2}=\bigcup_{i=0}^{deg(h_{g})-1}\left\{ x^{i}\ast \left(
l(x),g(x)+ua(x)\right) \right\}
\end{equation*}

and
\begin{equation*}
S_{3}=\bigcup_{i=0}^{deg(b)-1}\left\{ x^{i}\ast \left(h_g(x)l(x),uh_{g}(x)a(x)\right) \right\} .
\end{equation*}

Then $S=S_{1}\cup S_{2}\cup S_{3}$ forms a minimal spanning set for $\mathcal{C}$ as a $R$-module.
\end{theorem}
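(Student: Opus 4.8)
The plan is to prove the two assertions separately: that $S$ generates $\mathcal{C}$ as an $R$-module, and that it is minimal, in the sense that no proper subset of $S$ generates $\mathcal{C}$. For the generating part, recall that $\mathcal{C}$ is generated over $R$ by $(f(x),0)$ and $(l(x),g(x)+ua(x))$ and is invariant under the shift $x\ast$, so every element of $\mathcal{C}$ is an $R$-combination of the shifts $x^{i}\ast(f(x),0)$ and $x^{i}\ast(l(x),g(x)+ua(x))$ with $i\ge 0$. The point is that high shifts collapse back onto $S$, through three relations. First, $h_{f}(x)\ast(f(x),0)=(x^{r}-1,0)\equiv(0,0)$, so any shift $x^{i}\ast(f(x),0)$ with $i\ge\deg h_{f}$ is, modulo lower shifts, a $\mathbb{Z}_{2}$-combination of $S_{1}$. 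Second, $h_{g}(x)\ast(l(x),g(x)+ua(x))=\bigl(h_{g}(x)l(x),\,(x^{s}-1)+uh_{g}(x)a(x)\bigr)\equiv\bigl(h_{g}(x)l(x),uh_{g}(x)a(x)\bigr)$, which is precisely the generator of $S_{3}$; hence any shift $x^{i}\ast(l(x),g(x)+ua(x))$ with $i\ge\deg h_{g}$ reduces, modulo lower shifts, to a combination of shifts of the $S_{3}$-generator. Third, $b(x)\ast\bigl(h_{g}(x)l(x),uh_{g}(x)a(x)\bigr)=\bigl(\frac{x^{s}-1}{a(x)}l(x),\,u(x^{s}-1)\bigr)\equiv\bigl(\frac{x^{s}-1}{a(x)}l(x),0\bigr)$, and by the divisibility hypothesis $f(x)\mid\frac{x^{s}-1}{a(x)}l(x)$ of Theorem \ref{Main} this equals $q(x)\ast(f(x),0)$ for some binary $q(x)$; so any shift of the $S_{3}$-generator with index $\ge\deg b$ reduces, modulo lower shifts, into the span of the $(f(x),0)$-shifts, hence into $\langle S_{1}\rangle$ by the first relation. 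Composing the three reductions — push shifts of the second generator down into shifts of the $S_{3}$-generator, push those into shifts of $(f(x),0)$, push those into $S_{1}$ — shows every generator shift lies in $\langle S\rangle_{R}$, so $\mathcal{C}=\langle S\rangle_{R}$. The chain terminates because each reduction strictly decreases either the shift index being treated or the degree bound of the relevant polynomial, so the supporting induction is well-founded.

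For minimality I would count dimensions over the residue field of the local ring $R=\mathbb{Z}_{2}+u\mathbb{Z}_{2}$, whose maximal ideal is $(u)$ with $u^{2}=0$. By Nakayama's lemma, every generating set of the finitely generated $R$-module $\mathcal{C}$ surjects onto a spanning set of $\mathcal{C}/u\mathcal{C}$ and so has size at least $\dim_{\mathbb{Z}_{2}}(\mathcal{C}/u\mathcal{C})$; consequently any generating set of exactly that size is minimal. Thus it suffices to verify $|S|=\dim_{\mathbb{Z}_{2}}(\mathcal{C}/u\mathcal{C})$. We have $|S|=|S_{1}|+|S_{2}|+|S_{3}|=\deg h_{f}+\deg h_{g}+\deg b=(r-t_{1})+(s-t_{2})+(t_{2}-t_{3})=r+s-t_{1}-t_{3}$; on the other hand, reading off the standard form of the generator matrix in Theorem \ref{Matrices}, the minimal number of generators of $\mathcal{C}$ is $k_{0}+k_{1}+k_{2}$, which with the type $(r,s;\,r-t_{4},\,s-t_{2},\,t_{2}+t_{4}-t_{1}-t_{3})$ equals $(r-t_{4})+(s-t_{2})+(t_{2}+t_{4}-t_{1}-t_{3})=r+s-t_{1}-t_{3}$. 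The two numbers coincide, which proves minimality. As a cross-check one can instead count $\langle S\rangle_{R}$ directly: $u$ annihilates $S_{1}$ and $\{x^{i}f\}_{i<\deg h_{f}}$ are $\mathbb{Z}_{2}$-independent in $\mathbb{Z}_{2}[x]/(x^{r}-1)$, $S_{2}$ contributes the free $R$-part of rank $\deg h_{g}$ in the second coordinate, and $S_{3}$ contributes the $u$-torsion of $\mathbb{Z}_{2}$-dimension $\deg b$, so $\log_{2}|\langle S\rangle_{R}|=\deg h_{f}+2\deg h_{g}+\deg b=r+2s-t_{1}-t_{2}-t_{3}=k_{0}+2k_{1}+k_{2}=\log_{2}|\mathcal{C}|$, and since $S$ already spans, no element can be dropped.

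The step I expect to cause the most trouble is the bookkeeping inside the spanning argument: one must check that the third relation really feeds back into $S_{1}$ — this is exactly where the hypothesis $f(x)\mid\bigl(\frac{x^{s}-1}{a(x)}\bigr)l(x)$ is indispensable — and that the three nested reductions compose without a loop, so the induction on shift indices is well-founded. Degenerate situations, such as $\deg b=0$ (so $S_{3}=\emptyset$) or $t_{4}=t_{1}$, should be followed through to be sure the counts still balance, but they do not change the shape of the proof.
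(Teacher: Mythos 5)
Your spanning argument is correct and is, in substance, the argument of the source: the paper itself states this theorem without proof, citing \cite{ieee}, and the proof there is exactly the division-algorithm reduction you describe — divide the coefficient of $(f(x),0)$ by $h_f$, the coefficient of $(l(x),g(x)+ua(x))$ by $h_g$, and the coefficient of $(h_g(x)l(x),uh_g(x)a(x))$ by $b$, using $h_ff=x^r-1$, $h_gg=x^s-1$, $bh_ga=x^s-1$ and the hypothesis $f\mid\frac{x^s-1}{a}l$ to feed the last remainder back into $\langle S_1\rangle$. That half needs no repair.

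The minimality half, however, has a genuine gap. Your main route gets the lower bound $k_0+k_1+k_2$ on the number of generators from Nakayama together with the type formula $(r,s;\,r-t_4,\,s-t_2,\,t_2+t_4-t_1-t_3)$. But that formula is not independent input: in \cite{ieee} (and in this paper, which only quotes both statements back to back) the type of a cyclic code is obtained \emph{from} the minimal spanning set, by reducing $S$ to a standard-form generator matrix and counting rows; so invoking it to prove that $S$ is minimal is circular. The ``cross-check'' does not close the gap either: showing $|\langle S\rangle_R|=|\mathcal{C}|$ only re-establishes that $S$ spans — it cannot exclude that a proper subset spans — and the count $\log_2|\langle S\rangle_R|=\deg h_f+2\deg h_g+\deg b$ is itself precisely the independence claim a minimality proof must establish; you assert without argument that $S_1$, $S_2$, $S_3$ contribute independent factors $2^{\deg h_f}$, $4^{\deg h_g}$, $2^{\deg b}$. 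What is missing is a direct proof that no element of $S$ lies in the $R$-span of the others: take a putative dependence, project onto the last $s$ coordinates, separate the part modulo $u$ from the $u$-part, and use that $\{x^ig\}_{i<\deg h_g}$ is a $\mathbb{Z}_2$-basis of $\langle g\rangle$ and that the torsion contributions $\{x^ih_ga\}_{i<\deg b}$ meet $\langle g\rangle$ trivially (this is where coprimality of $a$, $b$, $h_g$, i.e.\ squarefreeness of $x^s-1$ for odd $s$, is used); then handle the first coordinate, where the remaining hypotheses on $f$ and $l$ from Theorem \ref{Main} (in particular $f\neq\frac{x^s-1}{a}l$) enter. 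Until such an argument is supplied, ``minimal'' is not proved.
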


Let $\C=\langle\left(f(x),0\right),\left(l(x),g(x)+ua(x)\right)\rangle$ be a one weight cyclic code in $R_{r,s}$. Consider the codewords $(v,0)\in \langle\left(f(x),0\right)\rangle$ and $(w_{r},w_{s})\in \langle \left(l(x),g(x)+ua(x)\right)\rangle$. Since $\C$ is a one weight code, $wt(v,0)=wt(w_{r},w_{s})$. Further, since $\C$ is a $R$-submodule, $u(w_{r},w_{s})=(0,uw_{s})\in\C$ and $wt(v,0)=wt(0,uw_{s})$. Moreover, $(v,uw_{s})\in \C$ because of the linearity of $\C$. But it is clear that $wt(v,uw_{s})\neq wt(v,0)$ and $wt(v,uw_{s})\neq wt(0,uw_{s})$. Hence, $\langle(f(x),0)\rangle$ can not generate a one weight code.

\bigskip

Now, let us suppose that $\C=\langle \left(l(x),g(x)+ua(x)\right)\rangle$ is a one weight cyclic code in $R_{r,s}$. We know from Corollary \ref{Cor} that $\deg g(x)=s-1$, $\m=2s$ and $g(x)$ generates a vector of length $s$ with all unit entries. Therefore, $l(x)$ also must generate a vector over $\Z_{2}$ with weight $s$. Hence, to generate such a cyclic one weight code we have two different cases; $r=s$ and $r>s$.

\bigskip

If $r=s$ then, to generate a vector with weight $s$, the degree of $l(x)$ must be $s-1$. So, $\left(l(x),g(x)+ua(x)\right)$ generates the codeword $(\underbrace{1\cdots 1}_\text{length $s$}| \underbrace{unit\cdots unit}_\text{length $s$})$.
\newline Further, if we multiply $\left(l(x),g(x)+ua(x)\right)$ by $h_{g}(x)$ we get $\left(h_{g}(x)l(x),uh_{g}(x)a(x)\right)$ and it generates codewords of order $2$. Since $r=s$ and the degrees of the polynomials $l(x)$ and $g(x)$ are $s-1$ we have $h_{g}=x+1$ and $h_{g}(x)l(x)=0$. Hence, $uh_{g}(x)a(x)$ must generate a vector with weight $2s$, i.e, $h_{g}(x)a(x)$ must generate a vector of length $s$ with all unit entries. This means that
\begin{eqnarray*}
(h_{g}(x)a(x)&=&\frac{x^{s}-1}{(x+1)}\\
\Rightarrow (x+1)a(x)&=&\frac{x^{s}-1}{(x+1)}\\
a(x)&=&\frac{x^{s}-1}{(x+1)^{2}}.
\end{eqnarray*}
 Hence we get $a(x)=\frac{x^{s}-1}{(x+1)^{2}}$. But, since we always assume that $s$ is an odd integer, $a(x)$ is not a factor of $(x^{s}-1)$ and this contradicts with the assumption $a(x)|(x^{s}-1)$. So, we can not allow $ua(x)h_{g}(x)$ to generate a vector, i.e, we must always choose $a(x)=g(x)$ to obtain $ua(x)h_{g}(x)=0$. So in the case where $\C$ is a one weight cyclic code generated by $\left(l(x),g(x)+ua(x)\right)$ in $R_{r=s,s}$, we only have $\C$ is a $\ZZ$-cyclic code of type $\left(s,s;0,1,0\right)$ with weight $\m=2s$.

\bigskip

 In the second case we have $r>s$. We know that $\C$ is a one weight cyclic code with weight $\m=2s$ and $g(x)=\frac{x^{s}-1}{x+1}$ generates a vector with exactly $s$ nonzero and all unit entries. Let $v=\left(v_{r},v_{s}\right)$ be a codeword of $\C$ such that $v_{r}=<l(x)>$ and $v_{s}=<g(x)+ua(x)>$. We can write $v$ as
$$
(\underbrace{a_{0}a_{1}\cdots a_{k-1}a_{k}}_\text{$s$ nonzero entries}| \underbrace{unit\cdots unit}_\text{length $s$})
$$
where $a_{i}\in \Z_{2}, k\in \Z$.
Since $\C$ is a $R$-submodule we can multiply $v$ by $u$, then we have
$$
(\underbrace{00\cdots 0}_\text{length $r$ }| \underbrace{u\cdots u}_\text{length $s$}).
$$
Let $w=\left(w_{r},w_{s}\right)$ be another codeword of $\C$ generated by $\left(h_{g}(x)l(x),ua(x)h_{g}(x)\right)$. Since $\C$ is a one weight code of weight $2s$, we can write $w=(\underbrace{b_{0}b_{1}b_{2}\cdots b_{t-1} b_{t}}_\text{$2s-2p$ nonzero entries }| \underbrace{u0uu0\cdots uu0u}_\text{$p$ nonzero entries}),$ $b_{i}\in \Z_{2}, t\in \Z$. Since $w+uv$ must be a codeword in $\C$, we have
$$
w+uv=(\underbrace{b_{0}b_{1}b_{2}\cdots b_{t-1}b_{t}}_\text{$2s-2p$ nonzero entries}| \underbrace{0u00u\cdots 00u0}_\text{$s-p$ nonzero entries}).
$$

Therefore, $wt\left(w+uv\right)=2s-2p+2s-2p=4s-4p$ and since $\C$ is a one weight code with $\m=2s$,
\begin{eqnarray*}
4s-4p=2s\Longrightarrow 2s=4p\Longrightarrow s=2p.
\end{eqnarray*}
But this contradicts with our assumption, that is, $s$ is an odd integer. Consequently, for $r>s$ and $g(x)\neq0$ there is no one weight $\ZZ$-cyclic code.
Under the light of all this discussion, we can give the following proved theorem.
\begin{theorem}
Let $\C$ be a $\ZZ$-cyclic code in $R_{r=s,s}$ generated by $\left(l(x),g(x)+ua(x)\right)$ with $\deg l(x)=\deg a(x)=\deg g(x)=s-1$. Then $\C$ is a one weight cyclic code of type $\left(r,s;0,1,0\right)$ with weight $\m=2s$. Furthermore, there do not exist any other one weight $\ZZ$-cyclic code with $g(x)\neq0$.
\end{theorem}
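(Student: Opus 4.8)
Since this theorem essentially repackages the case analysis carried out just above it, the plan is to split it into two halves. The positive half asserts that the displayed generator really does produce a one‑weight code of type $(r,s;0,1,0)$ and weight $2s$; the negative (classification) half asserts that every one‑weight $\ZZ$‑cyclic code with $g(x)\neq 0$ arises this way. Throughout I use the standing assumption that $s$ is odd, so that $x^{s}-1$ is squarefree over $\Z_{2}$ and has $1+x+\cdots+x^{s-1}$ as its unique monic divisor of degree $s-1$.

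For the positive half: the hypotheses force $g(x)=1+x+\cdots+x^{s-1}$ (the unique degree‑$(s-1)$ monic divisor of $x^{s}-1$), $a(x)=g(x)$ (as $a\mid g$ with $\deg a=\deg g$), and $l(x)=1+x+\cdots+x^{s-1}$, so the generator is $\bigl(\underbrace{1\cdots 1}_{s}\mid \underbrace{1+u\cdots 1+u}_{s}\bigr)$. Listing its $R$‑multiples — using $u^{2}=0$ and $(1+u)^{2}=1$ — gives, besides $0$, only $(\mathbf{1}\mid(1+u)\mathbf{1})$, $(\mathbf{1}\mid\mathbf{1})$ and $(\mathbf{0}\mid u\mathbf{1})$, each of weight $s+s=2s$, so $\C$ is one‑weight with $\m=2s$. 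For the type I would use the degree formula recalled after Theorem \ref{Main} with $f(x)=x^{r}-1$: here $t_{1}=r=s$, $t_{2}=t_{3}=s-1$, and $d_{1}(x)=\gcd\bigl(x^{s}-1,(x+1)l(x)\bigr)=x^{s}-1$ so $t_{4}=s$, giving type $(r,s;r-t_{4},s-t_{2},t_{2}+t_{4}-t_{1}-t_{3})=(r,s;0,1,0)$.

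For the negative half, let $\C=\langle(f(x),0),(l(x),g(x)+ua(x))\rangle$ (Theorem \ref{Main}) be one‑weight with $g(x)\neq 0$. First I would show the $(f(x),0)$‑generator is redundant: if $f(x)\neq x^{r}-1$ then $(f(x),0)$ is a nonzero codeword with zero $R^{s}$‑block while the $R$‑scalar multiple $u\cdot(l(x),g(x)+ua(x))=(\mathbf{0}\mid ug(x))$ is a nonzero codeword (as $g\neq 0$) with zero $\Z_{2}^{r}$‑block, so their sum $(f(x)\mid ug(x))\in\C$ has strictly larger weight than $(f(x),0)$, contradicting one‑weightedness; hence $\C=\langle(l(x),g(x)+ua(x))\rangle$, and Corollary \ref{Cor} (with its proof) gives $\m=2s$ and $\deg g=s-1$, so $g(x)=1+\cdots+x^{s-1}$, the $R^{s}$‑block of the generator has Lee weight $s$, and the weight equation forces $l(x)\bmod (x^{r}-1)$ to have Hamming weight $s$. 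Next I would rule out $r>s$: the multiple $(x+1)\ast(l(x),g(x)+ua(x))=\bigl((x+1)l(x),\,u(x+1)a(x)\bigr)$ is a codeword whose binary block is nonzero (if $(x+1)l\equiv 0\pmod{x^{r}-1}$ with $\deg l<r$, then $l=1+\cdots+x^{r-1}$, of weight $r>s$), say with $p$ units in its $R^{s}$‑block, so by one‑weightedness its binary weight is $2s-2p$; adding the codeword $(\mathbf{0}\mid u\mathbf{1})$ complements those $R^{s}$‑coordinates, yielding a nonzero codeword of weight $(2s-2p)+2(s-p)=4s-4p$, whence $4s-4p=2s$, i.e.\ $s=2p$ — impossible for $s$ odd. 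So $r=s$, which forces $l(x)=1+\cdots+x^{s-1}$. Finally I would show $a(x)=g(x)$: now $(x+1)\ast(l,g+ua)=(\mathbf{0}\mid u(x+1)a(x))$ has Lee weight $2\,wt_{H}((x+1)a(x))$, so if $a\neq g$ (hence $\deg a<s-1$ and $(x+1)a\not\equiv 0$) one‑weightedness forces $wt_{H}((x+1)a)=s$, i.e.\ $(x+1)a(x)=1+\cdots+x^{s-1}$ and $a(x)=\frac{x^{s}-1}{(x+1)^{2}}$ — impossible since $(x+1)^{2}\nmid x^{s}-1$. Hence $a(x)=g(x)$ and $\C$ is one of the codes of the positive half.

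I expect the only real obstacle to be bookkeeping: at each comparison I must verify that the auxiliary codeword involved (a $u$‑multiple, an $(x+1)$‑multiple, or a sum of these) is genuinely nonzero before invoking ``every nonzero word has weight $2s$'', and I must track correctly how adding the all‑$u$ vector to an $R^{s}$‑block flips its Lee weight. Once the divisibility facts about $x^{s}-1$ and the parity obstruction $s=2p$ are in place, everything else is mechanical.
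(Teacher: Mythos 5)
Your proposal is correct and follows essentially the same route as the paper's own (preceding) discussion: eliminate the $(f(x),0)$ generator by the weight comparison with $(0\,|\,ug(x))$, invoke Corollary \ref{Cor} to get $\m=2s$, $\deg g=s-1$ and $wt_H(l)=s$, rule out $r>s$ by the $4s-4p=2s\Rightarrow s=2p$ parity contradiction, and force $a(x)=g(x)$ via the impossibility of $a(x)=\frac{x^{s}-1}{(x+1)^{2}}$ for odd $s$; your extra care in checking the auxiliary codewords are nonzero and in computing the type from the degree formula only tightens the paper's argument. The one tacit step in your positive half — that $\deg l=s-1$ already forces $l=1+x+\cdots+x^{s-1}$ — really rests on the divisibility condition $x^{r}-1\mid\frac{x^{s}-1}{a(x)}l(x)$ from the canonical form of Theorem \ref{Main}, a looseness the paper's own statement and proof share.
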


\begin{example}
Let $\C=\langle\left(l(x),g(x)+ua(x)\right)\rangle$ be a cyclic code in $R_{7,7}$ with $l(x)=g(x)=a(x)=\left(1+x+x^3\right) \left(1+x^2+x^3\right)=1+x+x^2+x^3+x^4+x^5+x^6$. Hence, $\C$ is a one weight code with weight $\m=14$ and the following generator matrix,
$$
\left(
\begin{array}{ccccccc|ccccccc}
 1 & 1 & 1 & 1 & 1 & 1 & 1  & 1+u & 1+u & 1+u & 1+u & 1+u & 1+u & 1+u
\end{array}
\right).$$
\end{example}

\section{Examples of One Weight $\ZZ$-cyclic Codes}

In this part of the paper, we give some examples of one weight $\ZZ$-cyclic codes. Furthermore, we look at their binary images under the Gray map that we defined in (\ref{graymap}), and we conclude that their binary images are optimal codes.
If the minimum distance of any code $\mathcal{C}$ get the possible maximum value according to its
length and dimension, then $\mathcal{C}$ is called optimal (distance-optimal)
or good parameter code. \newline Let $\C$ be a $\ZZ$-linear code with minimum distance $d=2t+1$, then we say $\C$ is a $t$-error correcting code. Since, the Gray map preserves the distances, $\Phi(\C)$ is also a $t$-error correcting code of length $r+2s$ over $\Z_{2}$. Since, $|\Phi(\C)|=|\C|$, we can write a sphere packing bound for a $\ZZ$-linear code $\C$. With the help of usual sphere packing bound in $\Z_{2}$,
\begin{eqnarray*}
|\Phi(\C)|\sum_{j=0}^{t}{r+2s\choose j}\leq|2^{r+2s}|,
\end{eqnarray*}
we have
\begin{eqnarray*}
|\C|\sum_{j=0}^{t}{r+2s\choose j}\leq|2^{r+2s}|=|\Z_{2}^{r}\times R^{s}|.
\end{eqnarray*}
If $\C$ attains the sphere packing bound above then it is called a \emph{perfect code}. Let $\C$ be a $\ZZ$-linear code of type $(3,2;2,1,0)$ with standard form of the generator matrix

$$
G=\left(
\begin{array}{ccc|cc}
 1 & 0 & 1 & 0 & u \\
0 & 1 & 0 & 0 & u \\ \hline
0 & 0 & 1 & 1 & 1+u
\end{array}
\right).$$
It is easy to check that $\C$ attains the sphere packing bound, so $\C$ is a perfect code. Moreover, the dual code $\C^{\perp}$ of $\C$ is generated by the matrix
\begin{eqnarray}\label{orn}
H=\left(
\begin{array}{ccc|cc}
 1 & 0 & 1 & u & 0 \\ \hline
1 & 1 & 0 & 1+u & 1
\end{array}
\right)
\end{eqnarray}
and $\C^{\perp}$ is a one weight $\ZZ$-linear code with weight $\m=4$.

\bigskip

Plotkin bound for a code over $F_{q}^{n}$ with the minimum distance $d$ is given by,
\begin{itemize}
\item[1.]
If $d=\left(1-\frac{1}{q}\right)n$, then $|\C|\leq 2qn$.
\item[2.]
If $d>\left(1-\frac{1}{q}\right)n$, then $|\C|\leq \frac{qd}{qd-(q-1)n}$.
\end{itemize}

If $\C\subseteq F_{q}^{n}$ attains the Plotkin bound then $\C$ is also an equidistant code \cite{Lint}. We can apply this result for a binary images of $\ZZ$-linear codes. If the Gray image $\Phi(\C)$ of a $\ZZ$-linear code $\C$ attains the Plotkin bound, then $\C$ is a one weight code as well. For an example, the $\ZZ$-linear code $\C^{\perp}$ given by the generator matrix (\ref{orn}) attains the Plotkin bound and also it is a one weight code.

Finally, we will give the following examples of one weight $\ZZ$-cyclic codes. We also determine the parameters of the binary images of these one weight cyclic codes. Further we list some of optimal binary codes in Table \ref{table} which we derived from one weight $\ZZ$-cyclic codes via the Gray map.

\begin{example}
Let $\C$ be a $\ZZ$-cyclic code in $R_{15,15}$ generated by $\left(l(x),g(x)+ua(x)\right)$ where
\begin{eqnarray*}
l(x)&=&1+x^3+x^4+x^6+x^8+x^9+x^{10}+x^{11},\\
g(x)&=&x^{15}-1,~a(x)=1+x^3+x^4+x^6+x^8+x^9+x^{10}+x^{11}.
\end{eqnarray*}
Then $\C$ is a one weight code with weight $\m=24$ and following generator matrix
$$
\resizebox{\linewidth}{!}{
$\displaystyle
\left(
\begin{array}{ccccccccccccccc|ccccccccccccccc}
 1 & 0 & 0 & 1 & 1 & 0 & 1 & 0 & 1 & 1 & 1 & 1 & 0 & 0 & 0 & u & 0 & 0 & u & u & 0 & u & 0 & u & u & u & u & 0 & 0 & 0 \\
 0 & 1 & 0 & 0 & 1 & 1 & 0 & 1 & 0 & 1 & 1 & 1 & 1 & 0 & 0 & 0 & u & 0 & 0 & u & u & 0 & u & 0 & u & u & u & u & 0 & 0 \\
 0 & 0 & 1 & 0 & 0 & 1 & 1 & 0 & 1 & 0 & 1 & 1 & 1 & 1 & 0 & 0 & 0 & u & 0 & 0 & u & u & 0 & u & 0 & u & u & u & u & 0 \\
 0 & 0 & 0 & 1 & 0 & 0 & 1 & 1 & 0 & 1 & 0 & 1 & 1 & 1 & 1 & 0 & 0 & 0 & u & 0 & 0 & u & u & 0 & u & 0 & u & u & u & u
\end{array}
\right)$}.$$
Furthermore, the binary image $\Phi(\C)$ of $\C$ is a $[45,4,24]$ code, which is a binary optimal code \cite{8}.
\end{example}

\begin{example}
The $\ZZ$-cyclic code $\C=\langle\left(l(x),g(x)+ua(x)\right)\rangle$ in $R_{9,9}$ is a one weight code with $\m=18$, where $l(x)=g(x)=a(x)=1+x+x^2+x^3+x^4+x^5+x^6+x^7+x^8$. $\C$ has the generator matrix of the form,
$$
\resizebox{\linewidth}{!}{
$\displaystyle
\left(
\begin{array}{ccccccccc|ccccccccc}
 1 & 1 & 1 & 1 & 1 & 1 & 1 & 1 & 1 & 1+u & 1+u & 1+u & 1+u & 1+u & 1+u & 1+u & 1+u & 1+u
\end{array}
\right).
$}
$$
The Gray image of $\C$ is an optimal $[27,2,18]$ binary linear code.
\end{example}

\begin{example}
Let $\C=\langle\left(l(x),g(x)+ua(x)\right)\rangle$, $l(x)=a(x)=1+x+x^2+x^4,~g(x)=x^7-1$, be a cyclic code in $R_{7,7}$. Then the generator matrix of $\C$ is
$$
\left(
\begin{array}{ccccccc|ccccccc}
 1 & 1 & 1 & 0 & 1 & 0 & 0 & u & u & u & 0 & u & 0 & 0 \\
 0 & 1 & 1 & 1 & 0 & 1 & 0 & 0 & u & u & u & 0 & u & 0 \\
 0 & 0 & 1 & 1 & 1 & 0 & 1 & 0 & 0 & u & u & u & 0 & u
\end{array}
\right).
$$

$\C$ is a one weight code with $\m=12$ and $\Phi(\C)$ is a $[21,3,12]$ binary code which is optimal.
\end{example}

\newpage

\begin{center}
\captionof{table}{The Table of Optimal Parameter Binary Linear Codes Derived from the One Weight $\ZZ$-cyclic Codes}\label{table}
\end{center}
\begin{tabularx}{\textwidth}{||X||c||c||}
\hline\hline
Generators & $\mathbb{Z}_{2}\mathbb{Z}_{2}[u]-$type & Binary Image  \\ \hline\hline
$l(x)=1+x+x^2+x^4,~g(x)=x^{21}-1,~a(x)=1+x+x^2+x^4+x^7+x^8+x^9+x^{11}+x^{14}+x^{15}+x^{16}+x^{18}$ & $
[7,21;0;0,3]$ & $[49,3,28]$ \\ \hline\hline
$l(x)=a(x)=1+x^2+x^4+x^5+x^6+x^8+x^9+x^{13}+x^{14}+x^{15}+x^{16}+x^{17}+x^{20}+x^{21}+x^{23}+x^{26},~g(x)=x^{31}-1
$ & $[31,31;0;0,5]$ & $[93,5,48]$  \\ \hline\hline
$l(x)=1+x+x^3+x^4+x^6+x^7+x^9+x^{10}+x^{12}+x^{13}+x^{15}+x^{16}+x^{18}+x^{19}+x^{21}+x^{22}+x^{24}+x^{25}~,g(x)=x^{15}-1,~a(x)=1+x+x^3+x^4+x^6+x^7+x^9+x^{10}+x^{12}+x^{13}$
& $[27,15;0;0,2]$ & $[57,2,38]$  \\ \hline\hline
$l(x)=1+x^2+x^3+x^4+x^7+x^9+x^{10}+x^{11}+x^{14}+x^{16}+x^{17}+x^{18}+x^{21}+x^{23}+x^{24}+x^{25}+x^{28}+x^{30}+x^{31}+x^{32},~g(x)=x^{21}-1,~a(x)=1+x^2+x^3+x^4+x^7+x^9+x^{10}+x^{11}+x^{14}+x^{16}+x^{17}+x^{18}$ & $[35,21;0,0,3]$ & $[77,3,44]$  \\ \hline\hline
\end{tabularx}

\section{Conclusion}

In this paper, we  study the one weight linear and cyclic codes over $\mathbb{Z}_{2}^{r}\times\left(\mathbb{Z}_{2}+u\mathbb{Z}_{2}\right)^{s}$ where $u^2=0$. We also classify this family of codes and present some illustrative examples. We also obtain optimal parameter binary codes derived from the Gray images of one weight $\ZZ$-cyclic codes.

%\section*{Acknowledgement}

%The author would like to thank the anonymous reviewers for their careful checking of the paper and the insightful comments and suggestions.

\end{document}